\title{  Using $\pi$-Calculus Names as Locks
        }
\author{Daniel Hirschkoff
  \institute{ENS de Lyon}
  \email{daniel.hirschkoff@ens-lyon.fr}
  \and
  Enguerrand Prebet
  \institute{Karlsruhe Institute of Technology}
  \email{enguerrand.prebet@kit.edu}
  }
\definecolor{funcolor}{rgb}{.528,.266,.595}  
\newcommand{\api}{\ensuremath{A\pi}}
\newcommand{\piL}{\ensuremath{\pi \mathrm{\lel}}}
\newcommand{\piw}{\ensuremath{\pi \mathrm{\lel w}}}
\newcommand{\CCSL}{\ensuremath{\mathrm{CCS}\lel}}
\newcommand{\lamlo}{\ensuremath{\lambda_{\mathrm{lock}}}}
\newcommand{\tlockw}[2]{\ensuremath{\langle#1\rangle^{#2}}}
\renewcommand{\tlockw}[2]{\ensuremath{\langle#1\rangle_{#2}}}
\newcommand{\tlock}[1]{\ensuremath{\langle#1\rangle}}
\newcommand{\lel}{\ensuremath{\ell}}
\newcommand{\obs}{\ensuremath{o}}
\renewcommand{\typ}[2]{\ensuremath{#1\vdash#2}}
\newcommand{\Rs}{\ensuremath{\mathbb{R}}}
\newcommand{\G}{\ensuremath{\Gamma}}
\newcommand{\g}{\ensuremath{\gamma}}
\newcommand{\emps}{\ensuremath{\emptyset}}
\newcommand{\Gop}[2]{\ensuremath{#1\bullet #2}}
\newcommand{\Gopl}[3]{\ensuremath{#1\bullet_{#2} #3}}
\newcommand{\Gapp}[1]{\ensuremath{\mathsf{flatten} #1}}
\newcommand{\connname}{\ensuremath{\mathsf{connect}}}
\newcommand{\conn}[2]{\ensuremath{\connname(#1;#2)}}
\newcommand{\Pdl}{\ensuremath{P_{\mathsf{dl}}}}
\newcommand{\ttyp}{\ensuremath{\tau}}
\renewcommand{\ttyp}{\ensuremath{\mathsf{T}}}
\newcommand{\bbool}{\ensuremath{\mathsf{bool}}}
\newcommand{\ffalse}{\ensuremath{\mathsf{ff}}}
\newcommand{\ttrue}{\ensuremath{\mathsf{tt}}}
\newcommand{\bval}{\ensuremath{\mathtt{b}}}
\newcommand{\encw}[1]{\ensuremath{[\![#1]\!]_w}}
\newcommand{\lset}{\ensuremath{\mathbb S}}
\renewcommand{\nil}{\ensuremath{\bm{0}}}
\newcommand{\wait}[2]{\ensuremath{#1(\!(#2)\!)}}
\newcommand{\wtau}[1]{\ensuremath{\tau/#1}}
\newcommand{\tstra}[5]{\ensuremath{[{#1};{#2}]\stra{#3}[#4;{#5}]}}
\newcommand{\flocks}[1]{\ensuremath{\mathrm{fln}(#1)}}
\newcommand{\blocks}[1]{\ensuremath{\mathrm{bln}(#1)}}
\newcommand{\sharel}[1]{\ensuremath{\stackrel{#1}{\longleftrightarrow}}}
\newcommand{\nwbe}{\ensuremath{\not\wbe}}
\newcommand{\wbeo}{\ensuremath{\wbe_o}}
\newcommand{\wbew}{\ensuremath{\wbe_w}}
\newcommand{\bisL}{\ensuremath{\approx}}
\newcommand{\nbisL}{\ensuremath{\not\bisL}}
\newcommand{\bisw}{\ensuremath{\approx_w}}
\begin{document}
\longfalse

\newtheorem{definition}{Definition}
\newtheorem{example}[definition]{Example}
\newtheorem{lemma}[definition]{Lemma}
\newtheorem{remark}[definition]{Remark}
\newtheorem{proposition}[definition]{Proposition}
\newtheorem{theorem}[definition]{Theorem}
\newtheorem{corollary}[definition]{Corollary}

\maketitle
\begin{abstract}
  Locks are a classic data structure for concurrent programming. 
  We introduce a type system to ensure that names of the asynchronous
  $\pi$-calculus are used as locks.
    Our calculus also features a construct to deallocate a lock once
  we know that it will never be acquired again.
    Typability  guarantees two properties:
  deadlock-freedom, that is, no acquire operation on a lock waits forever; and
  leak-freedom, that is, all locks are eventually deallocated.

  We leverage the simplicity of our typing discipline to study the
  induced typed behavioural equivalence. After defining barbed
  equivalence, we introduce a sound labelled bisimulation, which
  makes it possible to establish equivalence between programs that
  manipulate and deallocate locks.
        \end{abstract}

\section{Introduction}\label{s:intro}

The $\pi$-calculus is an expressive process calculus based on the
notion of name, in which name-passing is the primitive notion of
interaction between processes. Processes of the $\pi$-calculus have been used to
represent several aspects of programming, like 
data structures, protocols, or  constructs such as functions, continuations, objects, and
references.
The $\pi$-calculus also comes with a well-developed theory
of behavioural equivalence. This theory can be exploited to reason about contextual equivalence in
programming languages, by translating programs as $\pi$-calculus processes.

In this work, we follow this path for locks, a basic data structure
for concurrent programming. We study how $\pi$-calculus names can be
used to represent locks. We show that the corresponding programming discipline
in the $\pi$-calculus induces a notion of behavioural
equivalence between processes, which can be used to reason about
processes manipulating locks.
This approach has been followed to analyse several disciplines for
the usage of $\pi$-calculus names:
linearity~\cite{DBLP:journals/toplas/KobayashiPT99},
receptiveness~\cite{DBLP:conf/icalp/Sangiorgi97},
locality~\cite{DBLP:journals/mscs/MerroS04},
internal mobility~\cite{DBLP:journals/tcs/Sangiorgi96a},
functions~\cite{DBLP:journals/iandc/Sangiorgi94,DBLP:conf/lics/DurierHS18}, references~\cite{DBLP:conf/concur/HirschkoffPS20,DBLP:conf/icalp/Prebet22}.

It is  natural to represent locks in \api, the asynchronous version of the
$\pi$-calculus~\cite{DBLP:journals/tcs/AmadioCS98,DBLP:conf/ecoop/HondaT91}. A lock is referred
to using a $\pi$-calculus name. It is represented as an asynchronous output: the
release of the lock. Dually, an input represents the acquire operation
on some lock.

\smallskip

In this paper, we introduce \piw, a version of the asynchronous
$\pi$-calculus with only lock names. 
Two properties should be ensured for names to be used as locks: first,
a lock can appear at most once in released form. Second, 
acquiring a lock entails the obligation to release it.
{For instance, process
  $\inp{\lel_1}x.(\out{\lel_1}x|\out{\lel_2}x)$ has these  properties: the process acquires lock $\lel_1$, then releases it,
  together with lock $\lel_2$.
    We remark that this this process owns lock $\lel_2$, which is released after $\lel_1$ is
  acquired.
We show that a simple type system can be defined to guarantee the
two  properties mentioned above.
}

\smallskip

When manipulating locks, it is essential to avoid the
program from getting stuck in a state where a lock needs to be
acquired but cannot be released.
Consider the following process:
$$\Pdl\quad\defi\quad \lel_1(x).(\out{\lel_1}x|\out{\lel_2}x)
~~|~~
\lel_2(y).(\out{\lel_1}y|\out{\lel_2}y)
.$$The subprocess on the left needs to acquire lock $\lel_1$, which is
owned by the other subprocess, and symmetrically: this is a deadlock.
Our type system rules out processes that exhibit this kind of cyclic
dependency between locks. 
This is achieved by
controlling parallel composition: two
processes in parallel can share at most one lock name.
Process \Pdl{} thus cannot be typed,
because names $\lel_1$ and $\lel_2$ are shared
between the two subprocesses. The  acyclicity property enjoyed by typable processes yields
deadlock-freedom.

\smallskip

To avoid situations where a lock is in released state and cannot be
accessed, \piw{} also features a construct to deallocate a lock, called  \emph{wait},  
inspired from~\cite{DBLP:journals/pacmpl/JacobsB23}. Process $\wait\lel{x}.P$ waits until
no acquire is pending on lock \lel, at which point it deallocates
\lel, reading the final value stored in \lel{} as $x$. The reduction rule for wait
is
\begin{equation}
  \label{eq:wait}
  (\new\lel)\big(\,\out\lel v~|~ \wait\lel{x}.P\,\big)
  ~\red~
  P\sub vx
    \end{equation}
\noindent
provided \lel{} is not among the free names of $P$.
In the reduction above, the restriction on
\lel{} disappears after the last interaction involving \lel{} has taken
place.

\medskip

The main contributions of this work are the following:
\begin{itemize}
\item We introduce \piw, a $\pi$-calculus with higher-order locks
  (in the sense that locks can be stored in locks)
  and   lock deallocation. The type system for \piw{} controls 
      the usage and the sharing of lock 
  names between processes.
  We provide some examples to illustrate how locks can be manipulated
  according to the programming discipline induced by types.

\item We show that typable processes in \piw{} enjoy deadlock- and leak-freedom.
  The proofs rely on simple arguments involving   the graph induced by the
  sharing of locks among processes.

\item We analyse {typed behavioural equivalence} in \piw.  Types
  restrict the set of contexts that can interact with processes,
  yielding a coarser behavioural equivalence than in the untyped case.

  We first introduce typed barbed equivalence, written
  \wbew. Relation \wbew{} is defined by observing the behaviour of processes
  when they are placed in typable contexts. 
    We then express the interactions between typed processes and typed
  context by means of a Labelled Transition System (LTS) that takes
  into account typing constraints. This allows us to introduce
  \emph{typed bisimilarity}, \bisw, the main proof technique to
  establish barbed equivalence: we indeed prove soundness, that
  is, $\bisw\subseteq\wbew$.

  We discuss several examples that help to understand how we can
  reason about behavioural equivalence in \piw.
          We are not aware of existing labelled equivalences taking into
  account name deallocation in the $\pi$-calculus.
    \end{itemize}
Beyond \piw, we believe that \bisw{} can be used
as a building block when reasoning in the $\pi$-calculus about
programs that use various features, among which locks.

  \medskip
  
The aforementioned contributions are presented in two steps. We first
introduce \piL, an asynchronous $\pi$-calculus with higher-order
locks. \piw{} is obtained by adding the wait construct to
\piL. Several important ideas can be presented in \piL, and we
can build on the notions introduced for \piL{} to extend them for
\piw.

\smallskip

We now highlight some of the technical aspects involved in our work.

The type system for \piL{} guarantees deadlock-freedom, in the sense
that for typable processes, an acquire operation cannot be blocked
forever. This holds for \emph{complete processes}: a process is
complete if for every lock \lel{} it uses, a release of \lel{} is
available. Availability need not be immediate, for instance the
release operation on lock \lel{} may be blocked by an acquire on $\lel'$. We
prove progress based on the fact that the
type system guarantees acyclicity of the dependence relation between
locks. Progress  entails deadlock-freedom.

When adding the wait construct, we rely on a similar reasoning to
prove leak-freedom for \piw, which in our setting means that all locks
are eventually deallocated.
The type system for \piw{} is richer than the one for \piL{} not only
because it takes wait into account, but also because it makes it
possible to transmit the obligation of releasing or deallocating a
lock via another lock. For instance, it is possible, depending on the
type of \lel, that in process $\lel(\lel').P$, the continuation $P$
has the obligation not only to release lock \lel, but also to
deallocate $\lel'$, or release $\lel'$, or both.

\smallskip

To define typed barbed equivalence in \piL, written \wbe, we must take
into account deadlock-freedom, which has several consequences.  First,
we observe complete processes: intuitively, computations in \piL{}
make sense only for such processes, and a context interacting with a process
should not be able to block a computation by never performing some
release operation.
Second, all barbs are always observable in \piL. In other words, if
\lel{} is a free name of a complete typable process $P$, then $P$ can
never loose the ability to release \lel. This is in contrast with
barbed equivalence in the $\pi$-calculus, or in CCS, where the absence
of a barb can be used to observe behaviours.
We therefore adopt a stronger notion of barb, where the value stored
in a lock, and not only the name of the lock, can be observed.

The ideas behind \wbe{} are used to define \wbew, typed barbed
equivalence in \piw.
A challenge when defining typed bisimilarity in \piw{} is to come
up with labelled transitions corresponding to the reduction
in~\rref{eq:wait}. Intuitively, if $P\stra{\wait\lel v}P'$ ($P$
deallocates \lel{} and continues as $P'$), we must make sure that this
transition is the last interaction at \lel.
We define a typed LTS to handle name deallocation, and show that
bisimilarity is sound for barbed equivalence in \piw.

\paragraph*{Paper outline.}
We study \piL{} in Section~\ref{s:piL}.  We
first expose the essential ideas of our
deadlock-freedom proof in \CCSL, a simple version of the Calculus of
Communicating Systems~\cite{DBLP:books/sp/Milner80} with lock
names. After extending these results to \piL, we define barbed equivalence
for \piL, written \wbe. We provide a labelled semantics that is sound
for \wbe, and present several examples of behavioural equivalences in
\piL.
In Section~\ref{s:piw}, we add the wait construct, yielding  \piw. We
show how to derive leak-freedom, and define a labelled
semantics, building on the ideas of Section~\ref{s:piL}.
We discuss related and future work in Section~\ref{s:ccl}.

\section{\piL, a Deadlock-Free Asynchronous $\pi$-Calculus}\label{s:piL}

We present deadlock-freedom in the simple setting of \CCSL{} in Section~\ref{s:CCSL}. This
approach is extended to handle higher-order locks in \piL{} (Section~\ref{s:piL:deadlock}). We
study behavioural equivalence in \piL{} in Section~\ref{s:piL:behav}.

\subsection{\CCSL: Ensuring Deadlock-Freedom using Composition}\label{s:CCSL}

\CCSL{} is a simplification of \piL, to present the ideas
underlying the type system and the proof of deadlock-freedom.
\CCSL{} is defined as an asynchronous version of CCS with acquire and
release operations.
We postulate the existence of an infinite set of \emph{lock names},
written $\lel, \lel',\lel_1,\dots$,
   which we often simply call names.
 \CCSL{} processes are defined by the following grammar:
\begin{mathpar}
  P~::=~ \ell.P\OR\outC\ell   \OR(\new\ell) P\OR P_1|P_2.
\end{mathpar}
\outC\lel{} is the release of lock \lel. Process $\lel.P$ acquires
\lel{} and then acts as $P$---we say that $P$ performs an \emph{acquire on \lel}.
There is no  \nil{} process in \CCSL, intuitively because we do not take into
consideration processes with no lock at all.
Restriction is a binder, and we write \flocks P for the set of free
lock names in $P$.
If $\lset=\set{\lel_1,\dots,\lel_k}$ is a set of lock names, we write $(\new\lset)P$ for
$(\new\lel_1)\dots(\new\lel_k)P$.

The definition of structural congruence, written $\equiv$, and
reduction, written \stra{}, are standard. They are given in
Appendix~\ref{a:CCSL}. Relation \wtra{} is the transitive reflexive
closure of \stra{}.

\paragraph*{Type System.}

To define the type system for \CCSL, we introduce typing environments.
We use \g{} to range over sets of lock names.
We write $\g_1\#\g_2$ whenever $\g_1\cap\g_2=\emptyset$.
We write $\g,\ell$ for the set
$\g\uplus\set{\ell}$: the notation implicitly imposes $\ell\notin\g$.

\emph{Typing environments}, written \G, are sets of such sets, with
the additional constraint that these should be pairwise disjoint. We
write $\G=\g_1,\dots,\g_k$, for $k\geq 1$, to mean that \G{} is equal
to $\set{\g_1,\dots,\g_k}$, with $\g_i\#\g_j$ whenever $i\neq
j$.
The
$\g_i$s are called the \emph{components} of \G{} in this case, and
\dom\G, the domain of \G, is defined as $\g_1\cup\dots\cup\g_k$.
We write
$\G_1\#\G_2$ whenever $\dom{\G_1}\cap\dom{\G_2}=\emptyset$.

As for components \g, the notation $\G,\g$ stands for a set (of sets) that
can be written as $\G\uplus\set\g$.
Using these two notations together, we can write $\G,\g,\ell$ to
refer to a typing environment containing a component that contains
\lel. 
We sometimes add parentheses, writing e.g.\  $\G,(\g,\ell,\lel')$, to ease
readability.

The typing judgement is of the form \typ{\G;\Rs}P, where \Rs{} is a set of lock
  names.
  If \typ{\G;\Rs}P, then \Rs{} is the set of
locks  owned by $P$, that
must be released. Moreover any component \g{} of \G{} intuitively corresponds to
a subprocess of $P$ that only accesses the names in \g. Here,
accessing a lock name \lel{} means either releasing
\lel{} or performing an acquire on \lel, or both.
The typing rules are as follows:
\begin{mathpar}
    {
    \inferrule[\trans{Acq-C}]{\typ{\G,(\g,\ell);\Rs,\ell}P }{
      \typ{{\set{\Gapp(\G)\uplus (\g,\ell)}};\Rs}{\ell.P} }
    }

  \inferrule[\trans{Rel-C}]{ }{
    \typ{\G,(\g,\ell);\set\ell}{\outC\ell}    }

  \inferrule[\trans{New-C}]{ \typ{\G,(\g,\ell);\Rs,\ell}P }{ \typ{\G,\g;\Rs}{(\new\ell) P}
 }

 \inferrule[\trans{Par-C}]{ \typ{\G_1;\Rs_1}{P_1}\and \typ{\G_2;\Rs_2}{P_2} }{
   \typ{\Gop{\G_1}{\G_2};\Rs_1\uplus \Rs_2}{P_1|P_2} }
\end{mathpar}
  In rule \trans{Acq-C}, operator \Gapp{} has the effect of mergining
all components in a typing environment into a single component. In particular, if
$\G=\set{\g_1,\dots,\g_k}$, then \Gapp(\G) stands for
$\g_1\uplus\dots\uplus\g_k$.
Intuitively, the causal dependency introduced by
the prefix $\lel.P$ induces a dependence between \lel{} and all the locks in $P$, forcing these  locks to belong to
the same component.

In the typing rules, we write $\Rs,\ell$ for $\Rs\uplus\set\ell$,
i.e., we suppose $\ell\notin \Rs$, otherwise
the typing rule cannot be applied.
Lock \lel{} is added to \Rs{}  in rule
\trans{Acq-C}, to ensure that it will be released in the continuation
$P$, and in rule \trans{New-C}, to ensure that a newly created lock is
initialised with a release.
Correspondingly, rule \trans{Rel-C} type-checks the release of lock
\lel{} by imposing $\Rs=\set\lel$.

To type-check parallel composition, we use an operation to {compose typing environments}, written
\Gop{\G_1}{\G_2}.  For this, we
set 
$\Gop\emptyset\G = \G$ and $\Gop{(\G,\g)}{\G'} = \conn\g{\Gop\G{\G'}}$,
where \conn\g{\set{\g_1,\dots,\g_k}} is undefined as soon as there is $i$
such that $\g\cap\g_i$ contains at least two distinct elements, and otherwise
is defined as
\begin{mathpar}
   \conn\g{\set{\g_1,\dots,\g_k}} \quad=\quad
   \set{\g_i: \g_i\#\g} ~\uplus~ \set{\g\cup \Gapp(\set{\g_i:
     \g_i\cap\g      \neq\emptyset})}
  .
\end{mathpar}
In rule \trans{Par-C}, we impose that $\Rs_1$ and $\Rs_2$ are
disjoint: if lock \lel{} must be released, then this is done either by $P_1$ or
by $P_2$. Together with rule \trans{Rel-C}, this guarantees that any
$\lel\in\Rs$ is released
 exactly once. 

We present some examples to illustrate the type system.
\begin{example}\label{e:Gop}
  Processes
        $\lel_1.(\outC{\lel_1}|\outC{\lel_1})$
        and
        $\lel_1.\lel_2.\outC{\lel_1}$ 
        cannot be typed,
        because both violate linearity in the usage of locks:
        the former  releases lock
$\lel_1$ twice, and
        the latter does
        not release $\lel_2$ after acquiring it.

Process $P_1\defi \lel_1.(\outC{\lel_1}|\outC{\lel_2})$ acquires lock
$\lel_1$, and then releases locks $\lel_1$ and $\lel_2$. Let
$\g_{12} = \set{\lel_1,\lel_2}$; we can derive
\typ{\set{\g_{12}};\set{\lel_2}}{P_1}: locks $\lel_1$ and
$\lel_2$ necessarily belong to the same component when typing $P_1$.
Similarly, we have \typ{\set{\g_{12}};\set{\lel_1}}{P_2} with $P_2
\defi  \lel_2.(\outC{\lel_2}|\outC{\lel_1})$.
The typing derivations for $P_1$ and $P_2$ cannot be composed, because
of the presence of $\g_{12}$ in both, so $P_1|P_2$ cannot be
typed. This is appropriate, since $P_1|P_2$ presents a typical deadlock
situation, where $\lel_1$ is needed to release $\lel_2$ and
conversely.

On the other hand, process
 $P_3 \defi \lel_1.(\outC{\lel_1}|\outC{\lel_2}) ~|~
 \lel_2.\outC{\lel_2} ~|~ \lel_1.\outC{\lel_1}$ can be typed: we can derive
 \typ{\set{\g_{12}};\set{\lel_2}}{\lel_1.(\outC{\lel_1}|\outC{\lel_2})} and
 \typ{\set{\set{\lel_1},\set{\lel_2}};\emps}{ \lel_2.\outC{\lel_2} ~|~
   \lel_1.\outC{\lel_1}}, and we can compose these typing derivations,
 yielding \typ{\set{\g_{12}};\set{\lel_2}}{P_3}. Crucially, 
 components $\set{\lel_1}$ and $\set{\lel_2}$ are not merged in the second derivation
 for the composition to be possible.
Using similar ideas, we can define a typable process made of three
parallel components $P_1, P_2, P_3$ sharing a single lock, say \lel,
as long as each of the $P_i$ uses its own locks besides \lel.

We can derive 
 \typ{\set{\g_{12}};\emps}{P_4} with $P_4
\defi  \lel_1.\lel_2.(\outC{\lel_2}|\outC{\lel_1})$. We observe that
$P_4|P_4$ cannot be typed, although $P_4|P_4$ is `no more deadlocked'
than $P_4$ alone.

\end{example}

The typing rules enforce $\Rs\subseteq\dom\G$ when
deriving \typ{\G;\Rs}P.
  We say that $\ell$ is \emph{available} in process $P$ if $P$
  contains   a release of $\ell$ which is not under
  an acquire on $\ell$ in $P$.
  Intuitively, when \typ{\G;\Rs}P is derivable, $P$ is a well-typed process in which
  all lock names in \Rs{} are available in $P$. The type system thus
  guarantees a linearity property on the release of names in
  \Rs. However, lock names are not \emph{linear names} in the sense
  of~\cite{DBLP:journals/toplas/KobayashiPT99}, since there can be
  arbitrarily many  acquire operations on a given lock. 
  {When all free lock names are available in $P$, i.e. $\typ{\G;\flocks{P}}P$,
  we say that $P$ is \emph{complete}.}

        \begin{lemma}\label{l:CCS:typ:struct}
    The type system enjoys
    invariance under $\equiv$ and subject reduction:
    $(i)$
  If \typ{\G;\Rs}P and $P\equiv P'$, then \typ{\G;\Rs}{P'}.
  $(ii)$
  If  \typ{\G;\Rs}P and $P\stra{} P'$, then \typ{\G;\Rs}{P'} and
  $\flocks{P'}=\flocks P$. 
\end{lemma}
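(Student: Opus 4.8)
The plan is to prove the two clauses in order, each by induction on a derivation, since the subject-reduction clause~$(ii)$ will appeal to clause~$(i)$ in order to handle reductions that close under~$\equiv$.

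For~$(i)$ I would induct on the derivation of $P\equiv P'$, checking that typability is preserved by each generating law and is compatible with the congruence and equivalence rules. The cases for the monoid laws of parallel composition reduce to two algebraic facts about environment composition: that the operator $\bullet$ is commutative and associative (so that $\Gop{\G_1}{\G_2}=\Gop{\G_2}{\G_1}$ and composition reassociates), together with the corresponding symmetry and associativity of $\Rs_1\uplus\Rs_2$. Since $\bullet$ is defined by recursion on its left argument through \connname, establishing these two laws is the only real ingredient here, and I would prove them as standalone properties of $\bullet$ and \connname. Scope extrusion $(\new\ell)P\mid Q\equiv(\new\ell)(P\mid Q)$ (with $\ell\notin\flocks Q$) is handled by inverting \trans{Par-C} and \trans{New-C} on one side and rebuilding the derivation on the other; the side condition $\ell\notin\flocks Q$ is what lets us argue that the component carrying $\ell$ is contributed entirely by $P$. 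Restriction swapping and $\alpha$-conversion are routine.

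For~$(ii)$ I would induct on the derivation of $P\stra{}P'$. The contextual cases are direct: for a reduction under \trans{New-C} or \trans{Par-C} I invert the matching typing rule, apply the induction hypothesis to the active sub-process, and reassemble the derivation; the closure-under-$\equiv$ case is discharged using clause~$(i)$. The only base case is the synchronisation $\ell.P\mid\outC\ell\stra{}P$. Here I invert \trans{Par-C}, then \trans{Rel-C} on the release (which forces the release's set to be $\set\ell$ and its component to contain $\ell$) and \trans{Acq-C} on the acquire; the latter exposes a derivation $\typ{\G,(\g,\ell);\Rs',\ell}{P}$ of the continuation and tells us that $\ell.P$ is typed in the \emph{single} component $\set{\Gapp(\G)\uplus(\g,\ell)}$, i.e.\ in $\set{\Gapp(\G,(\g,\ell))}$. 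Composing this single component with the environment that typed the release reconstitutes the ambient environment (the release contributes only the name $\ell$), so the remaining obligation is to re-type $P$, with release set $\Rs'\uplus\set\ell$, in this merged single-component environment. The free-name equality $\flocks{P'}=\flocks P$ then comes for free from linearity: \trans{Acq-C} put $\ell$ into the release set of the continuation, so $P$ necessarily re-releases $\ell$ and hence $\ell\in\flocks P$, giving $\flocks{\ell.P\mid\outC\ell}=\set\ell\cup\flocks P=\flocks P$; the contextual cases preserve the equality immediately.

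I expect the re-typing step in the base case to be the main obstacle, and I would isolate it as a \emph{merging lemma}: if $\typ{\G;\Rs}P$ then $P$ remains typable once all components of $\G$ are collapsed into the single component $\Gapp(\G)$ (this is, in effect, the semantic content of the fact that an acquire links all of the continuation's locks). Its delicate case is \trans{Par-C}. Because $\bullet$ is partial---the composition is undefined as soon as the component being merged in shares two distinct names with an existing one---one cannot simply collapse both sub-environments and recompose them: two components that legitimately compose may overlap in several names once each is flattened. The crux is therefore to re-derive the two premises in suitably inflated environments, sharing names borrowed across the two sides, so that \connname{} fuses everything into a single component while never encountering two shared names at once. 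Getting this construction right---which names to add to which component, and why definedness of $\bullet$ is maintained throughout---is where the real work of the proof lies; by contrast the \trans{Acq-C} and \trans{Rel-C} cases of the merging lemma are immediate, since \trans{Acq-C} already yields a single-component environment and \trans{Rel-C} is an axiom.
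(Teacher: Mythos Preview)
The paper does not prove this lemma, so there is no reference argument to compare against; your plan is the natural one, and the paper's own remark (when discussing \piL) that the system enjoys ``merging components'' alongside invariance under $\equiv$ and subject reduction confirms that the merging lemma you isolate is indeed the intended key step.

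There is, however, a genuine gap in the base case of~$(ii)$. You assert that ``the release contributes only the name $\ell$'', but \trans{Rel-C} types $\outC\ell$ in \emph{any} environment $\G_2,(\g_2,\ell)$, so inversion of \trans{Par-C} may hand you a $\G_2$ containing components disjoint from the free names of the whole process. Concretely, take the reduct to be an acquire, say $P=\ell'.(\outC{\ell'}\mid\outC\ell)$: one can derive \typ{\{\{\ell,\ell'\},\{m\}\};\{\ell\}}{\ell.P\mid\outC\ell} by placing the junk component $\{m\}$ on the release side, yet $P$ itself, being a prefix, is typable only in a single-component environment by the shape of the conclusion of \trans{Acq-C}. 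Your merging lemma produces \typ{\{\{\ell,\ell'\}\};\{\ell\}}{P}, which is strictly smaller than the ambient \G, and nothing in the system lets you re-adjoin the stray component $\{m\}$. So merging alone does not close the base case; read literally, the subject-reduction clause actually fails on this example. The repair is routine---restrict to tight typings with $\dom\G=\flocks P$, or pair the merging lemma with a suitable weakening/strengthening principle---but it does mean your sentence about the release contributing ``only $\ell$'' must be replaced by an explicit treatment of the slack that \trans{Rel-C} permits.
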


\paragraph*{Deadlock-Freedom.}
  Intuitively, a deadlock in \CCSL{} arises from an acquire operation
  that cannot be performed.
    We say that a \emph{terminated process} is a parallel composition of release
  operations possibly under some restrictions.
    A process that contains at least an acquire and cannot
  reduce is a \emph{stuck process}. So in particular $\lel.\outC\lel$
  is stuck; the context may provide a release of \lel, triggering the
  acquire on \lel. On the other hand, if $P$ is a stuck process and
  complete, then $P$ is \emph{deadlocked}: intuitively, the context cannot
  interact with $P$ in order to trigger an acquire operation of $P$.
  Process \Pdl{} from Section~\ref{s:intro} is an example of a deadlock.
    We show that a complete process can only reduce to
  a terminated process, avoiding deadlocks.

   The proof of deadlock-freedom for \CCSL{} provides the structure of
the proofs for deadlock-freedom in \piL{} and leak-freedom in \piw. It
relies on progress: any typable process can reduce to reach a
terminated process.
We first present some lemmas related to the absence of cyclic
structures in \CCSL.

\begin{lemma}[Lock-connected processes]\label{l:lock:connect}
 We say  that $P$ is \emph{lock-connected} if
 \typ{\G;\Rs}P implies $\G=\G',\g$ for some $\G',\g$, with $\flocks
 P\subseteq\g$.
 In this situation,  we also have \typ{\set\g;\Rs}P.
  If $P$ and $Q$ are lock-connected and $\flocks P\cap\flocks Q$
  contains at least two distinct names, then $P|Q$ cannot be typed.
\end{lemma}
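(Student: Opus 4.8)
The plan is to prove the two assertions separately, linking them through one auxiliary \emph{thinning} lemma. For the first assertion I would prove: if $\typ{\G;\Rs}P$ and $L\subseteq\dom\G$ satisfies $L\cap\flocks P=\emptyset$, then $\typ{\G_0;\Rs}P$, where $\G_0$ is obtained from $\G$ by deleting every name of $L$ and discarding the components thereby emptied. (As $\Rs\subseteq\flocks P$—an easy invariant of the rules—we get $L\cap\Rs=\emptyset$, so $\Rs$ is untouched.) The proof is by induction on the derivation, and the assertion then follows at once: for lock-connected $P$ with $\typ{\G;\Rs}P$, $\G=\G',\g$, $\flocks P\subseteq\g$, take $L=\dom{\G'}=\dom\G\setminus\g$; pairwise disjointness of components gives $L\cap\flocks P=\emptyset$, and deleting $L$ empties exactly the components of $\G'$ while leaving $\g$ intact, so $\typ{\set\g;\Rs}P$. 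Two induction cases need care: in \trans{New-C}, when the component to shrink is the one carrying the just-restricted name $\ell$, I cannot drop it wholesale—in the premise it still contains $\ell\in\flocks P$—so I must delete only its unused names, which is exactly why thinning must act on names and not on whole components; in \trans{Par-C}, writing $L_i=L\cap\dom{\G_i}$, I thin each premise and recompose, which forces the check that deleting $L$ from $\Gop{\G_1}{\G_2}$ gives the same environment as composing the two thinned premises, and that definedness is preserved (deleting names only shrinks the intersections tested by $\connname$). I expect this commutation to be the one genuinely fiddly computation.

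\textbf{Non-typability of $P|Q$.} Suppose $\typ{\G;\Rs}{P|Q}$. Since \trans{Par-C} is the only rule with a parallel composition in its conclusion, inversion gives $\typ{\G_1;\Rs_1}P$ and $\typ{\G_2;\Rs_2}Q$ with $\G=\Gop{\G_1}{\G_2}$ \emph{defined}. Pick distinct $a,b\in\flocks P\cap\flocks Q$. Lock-connectedness of $P$ applied to $\typ{\G_1;\Rs_1}P$ yields a component $\g_P\in\G_1$ with $\flocks P\subseteq\g_P$, so $\set{a,b}\subseteq\g_P$; symmetrically $\set{a,b}\subseteq\g_Q$ for some $\g_Q\in\G_2$. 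I then show $\Gop{\G_1}{\G_2}$ cannot be defined. Unfolding the recursion for $\bullet$, each component of the left argument is connected in turn, via $\connname$, to the composite of the remaining left components with $\G_2$; in particular $\g_P$ is connected to an environment $\Gop{\G_1''}{\G_2}$ with $\G_1''\subseteq\G_1\setminus\set{\g_P}$. Since $\bullet$ and $\connname$ only keep or merge components by union—never split them—and $a,b$ start together in $\g_Q\in\G_2$, that environment contains a single component $C\supseteq\set{a,b}$. Then $\g_P\cap C\supseteq\set{a,b}$ has at least two elements, so the connect step for $\g_P$ is undefined; hence $\Gop{\G_1}{\G_2}$ is undefined whatever the order in which components are peeled, contradicting typability of $P|Q$.

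\textbf{Main obstacle.} Everything hinges on a precise understanding of $\bullet$ and $\connname$: for the first assertion, the commutation of name-deletion with $\bullet$ in the \trans{Par-C} case; for the second, the invariant that two names lying in one component of an argument remain in a common component throughout the composition. This invariant is the conceptual heart—it turns ``two parallel processes share at most one lock'' into the impossibility of a two-lock cyclic dependency—and I would isolate it as a small standalone fact about $\bullet$ before drawing the contradiction.
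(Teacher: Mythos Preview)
The paper does not include a proof of this lemma; it is stated, followed by an illustrative example, but no argument appears in the main text or the appendix. Your proposal therefore cannot be compared against a paper proof, but it is sound and would serve as a complete argument.

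Your thinning lemma is the right auxiliary result, and your diagnosis that one must delete \emph{names} rather than whole components (precisely because of the \trans{New-C} case) is correct. The invariant $\Rs\subseteq\flocks P$ you use is indeed an easy induction on derivations. The commutation of name-deletion with $\bullet$ in the \trans{Par-C} case holds because removing names lying outside $\flocks{P_1}\cup\flocks{P_2}$ can only shrink the intersections that $\connname$ tests, so definedness is preserved and the merged components are unchanged up to the deleted names.

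For the second assertion, your invariant---two names sharing a component of either argument of $\bullet$ end up in a common component of the result---is exactly the point, and it is immediate from the fact that $\connname$ only keeps or unions components. One refinement: the recursion $\Gop{(\G,\g)}{\G'}=\conn\g{\Gop\G{\G'}}$ leaves the peeling order unspecified, since $\G,\g$ is set notation. Rather than arguing ``whatever the order'', it is cleaner to \emph{choose} the decomposition $\G_1=\G_1',\g_P$, so that $\g_P$ is the outermost $\connname$. Then either $\Gop{\G_1'}{\G_2}$ is already undefined (and we are done), or by your invariant it contains a component $C\supseteq\g_Q\supseteq\{a,b\}$, whence $|\g_P\cap C|\geq 2$ and $\conn{\g_P}{-}$ is undefined. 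This sidesteps any appeal to order-independence of $\bullet$.
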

The property in Lemma~\ref{l:lock:connect} does not hold if $P$ and
$Q$ are not lock-connected: take for instance
$P=Q=\lel_1.\outC{\lel_1}|\lel_2.\outC{\lel_2}$, then we can derive
\typ{\set{\set{\lel_1},\set{\lel_2}};\emps}{P|Q}.
By the typing rule \trans{Acq-C}, any process of the form $\lel.P$ is
lock-connected. A typical example of a lock-connected process is
$\ell_1.(\outC{\lel_1}|\outC{\ell_2})|\ell_2.(\outC{\lel_2}|\outC{\ell_3})$: here
$\g=\set{\ell_1,\ell_2,\ell_3}$. Processes similar to this one are
used in the following lemma.

\begin{lemma}[No cycle]\label{l:nocycle}
  We write $P\sharel{\ell} Q$ when $\ell\in\flocks P\cap\flocks Q$. 
  Suppose there are $k>1$ pairwise distinct names
  $\ell_1,\dots,\ell_{k}$, and 
  processes $P_1,\dots,P_k$ such that
  $\ell_i.P_i\sharel{\ell_i}\ell_{(i+1)\!\mod k}.P_{(i+1)\!\mod k}$ for $1\leq i\leq k$.
  Then $P_1|\dots|P_k$ is not typable.
\end{lemma}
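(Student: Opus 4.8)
\emph{Setup.} I will prove the statement for the process $\ell_1.P_1 \mid \dots \mid \ell_k.P_k$; writing $Q_i \defi \ell_i.P_i$, the hypotheses read $Q_i \sharel{\ell_i} Q_{(i+1)\bmod k}$, i.e. $\ell_i \in \flocks{Q_i}\cap\flocks{Q_{(i+1)\bmod k}}$. By rule \trans{Acq-C}, every process of the form $\ell.P$ types to a single-component environment, so each $Q_i$ is lock-connected. The plan is to prove the following more general statement by induction on $k$, which then specialises to the $Q_i$: \emph{if $R_1,\dots,R_k$ ($k>1$) are lock-connected, $\ell_1,\dots,\ell_k$ are pairwise distinct, and $\ell_i\in\flocks{R_i}\cap\flocks{R_{(i+1)\bmod k}}$ for all $i$, then $R_1\mid\dots\mid R_k$ is not typable.} Generalising to arbitrary lock-connected $R_i$ (rather than only prefixed processes) is what makes the induction go through, since the contraction step below produces processes that are lock-connected but no longer of the form $\ell.P$.

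\emph{Base case $k=2$.} Here the two hypotheses give $\ell_1\in\flocks{R_1}\cap\flocks{R_2}$ and $\ell_2\in\flocks{R_2}\cap\flocks{R_1}$, so $\{\ell_1,\ell_2\}\subseteq\flocks{R_1}\cap\flocks{R_2}$ consists of two distinct names. Since $R_1$ and $R_2$ are lock-connected, Lemma~\ref{l:lock:connect} applies directly and yields that $R_1\mid R_2$ is not typable.

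\emph{Inductive step $k\geq 3$.} Suppose, for contradiction, that $R_1\mid\dots\mid R_k$ is typable. Using invariance under $\equiv$ (Lemma~\ref{l:CCS:typ:struct}$(i)$) to reassociate and inverting \trans{Par-C}, the subprocess $R_1\mid R_2$ is typable. If $\flocks{R_1}\cap\flocks{R_2}$ contained a second name besides $\ell_1$, Lemma~\ref{l:lock:connect} would make $R_1\mid R_2$ untypable, a contradiction; hence $\flocks{R_1}\cap\flocks{R_2}=\set{\ell_1}$. I claim $R':=R_1\mid R_2$ is lock-connected: any typing of $R'$ factors through \trans{Par-C} as $\Gop{\G_1}{\G_2}$ where each lock-connected $R_i$ contributes a component $\g_i\supseteq\flocks{R_i}$; as $\g_1$ and $\g_2$ share $\ell_1$, the definition of $\conn{\cdot}{\cdot}$ merges them into a single component containing $\flocks{R_1}\cup\flocks{R_2}=\flocks{R'}$. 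Now consider the shortened cycle $R', R_3,\dots,R_k$ ($k-1$ lock-connected processes) with the $k-1$ distinct names $\ell_2,\dots,\ell_k$. The sharing hypotheses still hold: $\ell_2\in\flocks{R_2}\subseteq\flocks{R'}$ gives the edge $R'\sharel{\ell_2}R_3$; the edges $R_i\sharel{\ell_i}R_{i+1}$ for $3\leq i\leq k-1$ are unchanged; and $\ell_k\in\flocks{R_1}\subseteq\flocks{R'}$ closes the cycle via $R_k\sharel{\ell_k}R'$. By the induction hypothesis $R'\mid R_3\mid\dots\mid R_k$ is not typable, which contradicts its $\equiv$-equality to the typable $R_1\mid\dots\mid R_k$ (again by Lemma~\ref{l:CCS:typ:struct}$(i)$).

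\emph{Main obstacle.} The delicate point is the contraction step, and specifically the claim that the merged process $R_1\mid R_2$ is again lock-connected: this is where the behaviour of the composition operator $\Gop{\cdot}{\cdot}$ enters, since I must argue that the two single-sharing components genuinely merge into one component covering all free locks. Everything else is bookkeeping on distinctness of names and on the sharing edges surviving the contraction. An alternative, non-inductive route is to unfold a single typing of the whole process directly: each $Q_i$ yields a component $\g_i\supseteq\{\ell_{i-1},\ell_i\}$, and one shows that when the last of these components is connected by $\conn{\cdot}{\cdot}$, the other $k-1$ components have already been merged (the cycle minus one vertex is connected) into a single block sharing \emph{two} distinct locks $\ell_{i-1},\ell_i$ with it, so that $\conn{\cdot}{\cdot}$ is undefined; this is essentially the same obstacle phrased in terms of tracking merges rather than induction.
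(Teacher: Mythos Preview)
Your proof is correct and rests on the same two ingredients as the paper's: each $\ell_i.P_i$ is lock-connected (rule \trans{Acq-C}), and composing two lock-connected processes that share a name again yields a lock-connected process, so that Lemma~\ref{l:lock:connect} eventually applies. The only organisational difference is that the paper does not contract the cycle step by step; instead it shows by a short induction along the path that $\ell_1.P_1\mid\dots\mid\ell_{k-1}.P_{k-1}$ is lock-connected, observes that this block shares the two distinct names $\ell_{k-1}$ and $\ell_k$ with $\ell_k.P_k$, and invokes Lemma~\ref{l:lock:connect} once at the end. Your contraction argument is equivalent, and more explicit about the need to generalise to arbitrary lock-connected $R_i$ (which the paper's path induction also relies on implicitly). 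One minor remark: the intermediate check that $\flocks{R_1}\cap\flocks{R_2}=\{\ell_1\}$ is unnecessary, since if the intersection were larger then $R_1\mid R_2$ would be untypable and hence vacuously lock-connected, and your induction hypothesis would still apply to the shortened cycle.
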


We use notation $\prod_iP_i$ for the parallel composition of processes $P_i$.
\begin{lemma}[Progress]\label{l:CCSL:prog}
  If \typ{\G;\flocks P}P, then either $P\stra{}P'$ for some $P'$, or 
  $P\equiv(\new\many\ell)\prod_i\outC{\ell_i}$ where     the $\ell_i$s are pairwise
  distinct. 
          \end{lemma}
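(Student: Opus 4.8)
\emph{Proof strategy.} I would prove the equivalent statement that, if $P$ admits no reduction, then it is a terminated process of the announced form; since $P$ either reduces or not, this yields the stated dichotomy. First I push all restrictions outwards by structural congruence (up to $\alpha$-conversion), writing $P\equiv(\new\lset)P_0$ with $P_0\defi\prod_i A_i$, where each top-level component $A_i$ is a release $\outC{m_i}$ or an acquire $m_i.Q_i$, and no restriction remains at top level. By Lemma~\ref{l:CCS:typ:struct}$(i)$ typing is stable under $\equiv$, and inverting \trans{New-C} once per restricted name gives $\typ{\G';\Rs_0}{P_0}$ with $\Rs_0=\flocks P\uplus\lset=\flocks{P_0}$; hence $P_0$ is again complete and, like $P$, cannot reduce. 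As $P\equiv(\new\lset)P_0$, it suffices to bring $P_0$ to terminated form, so I may assume $P=P_0$ has no top-level restriction. If none of the $A_i$ is an acquire they are all releases, and then rule \trans{Rel-C} together with the disjointness of $\Rs$-components demanded by \trans{Par-C} forces the $m_i$ to be pairwise distinct---exactly the terminated form. The whole difficulty is thus to exclude a top-level acquire.

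So suppose some component is an acquire, and let $L$ be the finite nonempty set of locks acquired at top level; note $L\subseteq\flocks P=\Rs$. Fix $\ell\in L$. Completeness makes $\ell$ available, and being in $\Rs$ it is released exactly once, say inside a unique component $C_\ell$. As $P$ cannot reduce there is no top-level release of $\ell$, so $C_\ell$ is not $\outC\ell$ itself but an acquire $C_\ell=m.R$ with $\outC\ell$ occurring in $R$; availability forbids $\outC\ell$ from sitting under an acquire on $\ell$, so $m\neq\ell$, and clearly $m\in L$. This defines a map $f\colon L\to L$, $f(\ell)\defi m$, which is fixed-point-free.

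Because $L$ is finite and $f$ is fixed-point-free, iterating $f$ yields a cycle $m_1,\dots,m_k$ of pairwise distinct locks with $f(m_i)=m_{(i+1)\bmod k}$ and $k>1$. By construction the component $C_{m_i}$ holding the release of $m_i$ is the acquire $m_{i+1}.R_i$ with $m_i\in\flocks{R_i}$; write $B_i\defi C_{m_i}$. The $B_i$ are genuine top-level components of $P$, pairwise distinct because their guards $m_{i+1}$ are. Each $B_i$ acquires $m_{i+1}$, while $B_{i+1}=m_{i+2}.R_{i+1}$ satisfies $m_{i+1}\in\flocks{R_{i+1}}\subseteq\flocks{B_{i+1}}$, so $B_i\sharel{m_{i+1}}B_{i+1}$: taking acquired names $\ell_i\defi m_{i+1}$, this is precisely the cyclic sharing of Lemma~\ref{l:nocycle}, whence $\prod_i B_i$ is not typable. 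On the other hand, $\prod_i B_i$ is a sub-product of the top-level components of $P$, and repeatedly inverting \trans{Par-C} (using commutativity of $|$ up to $\equiv$) shows that every such sub-product is typable---a contradiction. Hence no top-level acquire exists, and $P$ is terminated.

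The main obstacle I anticipate is the interplay with the definition of availability: a pending release may be guarded by acquires on \emph{other} locks, so the naive ``a complete process with a pending acquire also exhibits the matching release, hence reduces'' argument fails. The real content is to follow the chain of blocking dependencies through $f$ and to verify that each step stays a top-level acquire on a distinct lock, so that the extracted cycle fits the hypotheses of Lemma~\ref{l:nocycle} on the nose; establishing pairwise distinctness of the $B_i$ and pinning down the exact direction of the sharing relation $\sharel{}$ is the delicate bookkeeping.
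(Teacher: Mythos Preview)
Your proof is correct and follows essentially the same route as the paper: push restrictions outward, assume no reduction, and exhibit a cyclic dependency among top-level acquire components that contradicts Lemma~\ref{l:nocycle}. The only presentational difference is that the paper builds a graph whose vertices are the acquire components $Q_j$ and whose edges record ``the release of $\ell_j$ is available in $Q_{j'}$'', while you work with a fixed-point-free self-map $f$ on the set $L$ of acquired lock \emph{names}; iterating $f$ yields the same cycle. Your formulation is slightly tidier in that it sidesteps the paper's remark about two $Q_j$'s possibly starting with the same name, and your distinctness argument for the $B_i$ via distinct guards $m_{i+1}$ is exactly what is needed to match the hypotheses of Lemma~\ref{l:nocycle}.
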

\begin{proof}
  Write
  $P\equiv(\new\many\ell)P_0
  \text{ with }P_0 =
  \prod_i\outC{\ell_i}~|~\prod_j\lel_j.P_j$. We let
  $Q_j$ stand for $\lel_j.P_j$, and suppose that there is at least one
  $Q_j$.
  We show that  under
  this hypothesis $P_0$ can
  reduce.

  If $\lel_i=\lel_j$ for some $i,j$, then $P_0$ can reduce. We suppose
  in the following
  that this is not the case, and
  consider one of the $Q_j$s. By typing, there exists a unique occurrence of
  $\ell_j$ available in $P_0$. By hypothesis, this occurrence is not
  among the $\outC{\ell_i}$s.     Therefore,
  $\ell_j$ is available in $Q_{j'}$ for some unique $j'$ with $j\neq
  j'$.

  We construct a graph having one vertex for each of the $Q_j$s. We draw an
  edge between $Q_j$ and $Q_{j'}$ when  $\ell_j$ is
  available in $Q_{j'}$. By the reasoning we just made, each vertex is related to at
  least one other vertex. So the graph necessarily contains a cycle.
  We can apply Lemma~\ref{l:nocycle} to derive a contradiction. 
  
  We make two remarks about the construction of the graph. First,
  two $Q_j$s may start with an acquire at the same name. The corresponding
  vertices will have edges leading to the same $Q_{j'}$, and the
  construction still works.
    Second, if there is only one $Q_j$,   then the available release of $\lel_j$ 
  can synchronise with $Q_j$.   \end{proof}

  {By Lemma~\ref{l:CCSL:prog}, we have that any typable process is not deadlocked.
  Thus, by subject reduction, we can prove deadlock-freedom.
  \begin{proposition}[Deadlock-freedom]
    If \typ{\G;\Rs}P and $P\wtra{}P'$, then $P'$ is not deadlocked.
  \end{proposition}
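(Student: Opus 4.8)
The plan is to combine the two results already established: the progress property of Lemma~\ref{l:CCSL:prog} and the subject-reduction part of Lemma~\ref{l:CCS:typ:struct}. The statement unfolds the definition of deadlock-freedom: we must show that no reachable process $P'$ is deadlocked, where, by the definitions preceding the proposition, a process is deadlocked exactly when it is complete and stuck (contains an acquire, yet cannot reduce). The key observation is that Lemma~\ref{l:CCSL:prog} already rules out this combination for complete processes, so the work is mainly to propagate both typability and completeness along the reduction sequence $P\wtra{}P'$.

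First I would argue that it suffices to treat a single reduction step and then iterate, since $\wtra{}$ is the reflexive-transitive closure of $\stra{}$. So the core is: if $\typ{\G;\Rs}P$ and $P\stra{}P'$, then $P'$ is not deadlocked. Here I would immediately invoke Lemma~\ref{l:CCS:typ:struct}$(ii)$ to obtain $\typ{\G;\Rs}{P'}$ together with the crucial equality $\flocks{P'}=\flocks P$. The preservation of free locks is what lets completeness survive reduction: if $P$ is complete, meaning $\Rs=\flocks P$, then since $\Rs$ is unchanged and $\flocks{P'}=\flocks P$, we get $\typ{\G;\flocks{P'}}{P'}$, i.e.\ $P'$ is again complete. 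Iterating along the whole sequence, every process reachable from a complete $P$ is itself complete and typable.

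Next I would apply Lemma~\ref{l:CCSL:prog} to the reachable process $P'$: since $\typ{\G;\flocks{P'}}{P'}$, the lemma gives that either $P'\stra{}P''$ for some $P''$, or $P'\equiv(\new\many\ell)\prod_i\outC{\ell_i}$. In the first case $P'$ can reduce, hence is not stuck and a fortiori not deadlocked. In the second case $P'$ is structurally congruent to a parallel composition of releases under restrictions, which is precisely a terminated process containing no acquire, so again it is not deadlocked. Either way the disjunction of Lemma~\ref{l:CCSL:prog} precisely negates the definition of ``deadlocked'' (complete and stuck), closing the argument.

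One point I would be careful about is the statement's hypothesis: the proposition is stated for arbitrary $\typ{\G;\Rs}P$, whereas progress requires the complete case $\Rs=\flocks P$. I expect the intended reading is that $P$ is complete (as $P$ being complete is the natural precondition for deadlock-freedom to be meaningful, and the surrounding text only discusses complete processes); under that reading the argument above is immediate. The main subtlety, and the only genuinely load-bearing step, is the completeness-preservation observation: without the equality $\flocks{P'}=\flocks P$ supplied by Lemma~\ref{l:CCS:typ:struct}$(ii)$, one could not guarantee that the reduct stays in the regime where Lemma~\ref{l:CCSL:prog} applies, so I would make sure to flag that equality explicitly as the bridge between subject reduction and progress.
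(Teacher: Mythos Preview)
Your argument is correct and matches the paper's (very terse) proof, which simply says that progress shows typable processes are not deadlocked and subject reduction propagates typability; your explicit use of $\flocks{P'}=\flocks P$ to carry completeness along the reduction is exactly the right bridge.

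One clarification on your final paragraph: the proposition holds as stated, with no need to assume $P$ complete. Since ``deadlocked'' is \emph{defined} as complete-and-stuck, a non-complete $P'$ is vacuously not deadlocked; and by your own observation that subject reduction preserves both $\Rs$ and the set of free lock names, $P'$ is complete iff $P$ is. So the non-complete case for $P$ is trivial rather than a missing hypothesis, and the interesting case is precisely the one you worked out.
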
  }

\begin{remark}\label{r:nf}
  As \CCSL{} is finite, deadlock-freedom ensures that no acquire operation waits forever in a complete typable
  process,   and every complete process reduces to a terminated process:
  if \typ{\G;\flocks P}P,
  then $P\wtra{}
  (\new\many\ell)\prod_i\outC{\ell_i}$ where 
  the $\ell_i$s are pairwise
  distinct.
\end{remark}

\subsection{\piL: Deadlock-Freedom for Higher-Order Locks}\label{s:piL:deadlock}

\paragraph*{Syntax and Operational Semantics of \piL.}

 \piL{} extends \CCSL{} with the possibility to store \emph{values},
 which can be either booleans or locks, in locks. In this sense,
 \piL{} features higher-order locks.
 Processes 
 in \piL{}
 are defined as follows:
 \begin{mathpar}
   P\quad::=\quad \inp\lel{\lel'}.P\OR \out\ell v\OR (\new\ell)P\OR
   P_1|P_2\OR \nil\OR [v=v']P_1,P_2
   .
 \end{mathpar}
$v,v'$ denote \emph{values}, defined by    
$v~::=~
   \lel\OR\bval$, where $   \bval~::=~\ttrue\OR\ffalse$ is a boolean
   value.
  In addition to $\lel, \lel'\dots$,
we sometimes use also $x,y\dots$ to range over lock names, to suggest
a specific usage, like, e.g.\ in $\inp\lel x.P$.

 Process \out\lel{\lel'} is a \emph{release of \lel}, and
 $\inp\lel{\lel'}.P$ is an \emph{acquire on \lel}; we say in both
 cases that \lel{} is the \emph{subject} (or that \lel{} occurs in
 subject position) and $\lel'$ is the \emph{object}.
    Restriction and the acquire prefix act as binders, giving rise to the
notion of bound and free names. As in \CCSL, we write \flocks P for the set of free
lock names of $P$.
$P\sub v\lel$ is the process obtained by replacing
every free occurrence of \lel{} with $v$ in $P$.
We say that an occurrence of a process $Q$ in $P$ is \emph{guarded}
if the occurrence is under an acquire prefix, otherwise it is said
\emph{at top-level} in $P$. Additional operators w.r.t.\ \CCSL{}
 are the inactive process, \nil, and value comparison:
 $[v=v']P_1,P_2$  behaves like $P_1$ if values $v$ and $v'$
 are equal, and like $P_2$ otherwise.

Structural congruence in \piL{} is defined by adding the
 following
 axioms to $\equiv$
 in \CCSL: 
\begin{mathpar}[]
  P|\nil\,\equiv\,P

  (\new\lel)\nil\,\equiv\,\nil

  [v=v]P_1,P_2\,\equiv\, P_1

  [v=v']P_1,P_2\,\equiv\, P_2\text{ if }v\neq v'
\end{mathpar}
The last axiom above cannot be used under an acquire prefix: see
Appendix~\ref{a:piL} for the definition of $\equiv$.
\emph{Execution contexts}, 
are defined by $E
~::=~ \hole\OR E|P\OR (\new\lel)E$.
The axiom for reduction in \piL{} is:
\begin{mathpar}
  \inferrule{ }{\out\lel v~|~\inp\lel{\lel'}.P ~\red~
    P\sub v{\lel'} }
\end{mathpar}

\wtra{} is  the reflexive transitive closure of \stra{}.
  Labelled transitions, written $P\stra\mu P'$, use actions $\mu$
defined by $\mu~::=~\inp\ell v\OR\out\ell v\OR \bout\ell{\ell'}\OR\tau$, and are standard~\cite{SanWal}---we
recall the definition in Appendix~\ref{a:piL}.

\paragraph*{The type system.}

We enforce a sorting discipline for names~\cite{Mil91}, given by $V ::= \bbool\OR L$ and $\Sigma(L)=V$: values, that are
stored in locks, are either booleans or locks.  We consider that all
processes we write obey this discipline, which is left implicit. This
means for instance that when writing \out\ell v, $\ell$ and $v$ have
appropriate sorts; and similarly for $\ell(\lel').P$. In
$[v=v']P_1,P_2$, we only compare values with the same sort.

 \begin{figure}[t]
\begin{mathpar}
          \inferrule[{Acq}]{
    \typ{\G,({\g,\ell,\lel'}); \Rs,\ell}P
  }{ \typ{\set{\Gapp(\G) \uplus (\g,\ell)};\Rs}{\inp\ell{\lel'}.P}} 
  
  \inferrule[{Rel}]{ 
  }{ \typ{\G,(\g,\lel,v); \set\ell}{\out{\ell}{v}} }   
  
  \inferrule[{New}]{\typ{\G,(\g,\ell);\Rs,\ell}P
  }{\typ{\G,\g;\Rs}{(\new \ell)P}} 
  \\
  \inferrule[{Par}]{\typ{\G_1;\Rs_1}{P_1}\and \typ{\G_2;\Rs_2}{P_2} }{
    \typ{\Gop{\G_1}{\G_2};\Rs_1\uplus \Rs_2}{ P_1|P_2} }
  
  \inferrule[{Mat}]{\typ{\G;\Rs}{P_1}    \and\typ{\G;\Rs}{P_2}
  }{\typ{\G;\Rs}{[v=v']P_1,P_2} }       
  
\end{mathpar}
   \caption{Typing rules for \piL}
   \label{f:piL}
 \end{figure}

The typing judgement is written \typ{\G;\Rs}P, where \G{} and $\Rs$ are
defined like for \CCSL. We adopt the convention that if $v$ is a
boolean value, then $\g,v$ is just \g, and similarly, $\g,\lel$ is
just \g{} if the sort of \lel{} is \bbool.
The operation \Gop{\G_1}{\G_2} is
the same as for \CCSL.

 The typing rules for \piL{} are presented in Figure~\ref{f:piL}.
 Again, in rules \trans{Acq} and \trans{New}, writing $\Rs,\lel$
  imposes $\ell\notin \Rs$, otherwise the rule cannot
 be applied. Similarly, the notation $\g,\ell,\ell'$ is only defined
 when $\g\#\set{\ell,\ell'}$ and $\ell\neq\ell'$.
 Rule \trans{Rel} describes the release of a lock containing either a
 lock or a boolean value: in the latter case, using the convention
 above, the conclusion of the rule is \typ{\set{\set{\lel}};\set\lel}{\out\lel\bval}.
In rules \trans{Acq} and \trans{Rel}, the subject and the object of
the operation should belong to the same component. In
\CCSL, only prefixing yields such a constraint.

In rule \trans{Mat}, we do not impose $\set{v,v'}\in\dom\G$.
A typical example of a process that uses name comparison is
$[\lel_1=\lel_2]\out{\lel}\ttrue,\out{\lel}\ffalse$: in this process,
$\lel_1$ and $\lel_2$ intuitively represent no threat of a deadlock. 

\medskip

Before presenting the properties of the type system, we make some
comments on the discipline it imposes on $\pi$-calculus names when
they are used as locks.

\begin{remark}[An acquired lock cannot be stored]\label{r:acquired}
In \piL, the obligation to release a lock cannot be
transmitted. Accordingly, $\lel'\notin\Rs=\set\lel$ in rule
\trans{Rel}, and a process like $\lel(\lel').\out{\lel_1}{\lel}$
cannot be typed.
We return to this point after Proposition~\ref{p:lock}.
\end{remark}

\begin{remark}[Typability of higher-order locks]  
Locks are a particular kind of names of the asynchronous
$\pi$-calculus (\Api). Acquiring
a lock that has been stored in another lock boils down to
performing a communication in \Api.
We discuss how such communications can occur between typed processes.

In rule \trans{Rel}, $\ell$ and $\ell'$ must belong to the same component
of \G. So intuitively,  if  a process contains \out\lel{\lel'}, this
release is the only place where these locks are used `together'.
A reduction involving a well-typed process containing this release therefore looks like
  \begin{mathpar}
    (\out\ell{\ell'}| P)~~|~~(\ell(x).Q|Q') \quad\red\quad
    P\,|\,Q\sub{\ell'}x\,|\,Q'
    .
  \end{mathpar}
    Parentheses are used to suggest an interaction between two
  processes; $\out\lel{\lel'}|P$ performs the release, and
  $\lel(x).Q|Q'$ performs the acquire.
    Process $P$, which intuitively is the continuation of the release, may use
  locks \lel{} and $\lel'$, but not together, and similarly for $Q'$. 
  For instance we may have $P= P_\lel|P_{\lel'}$, where $\lel'$ does not
occur in $P_\lel$, and vice-versa for $P_{\lel'}$.
  Note also that $\lel'$ is necessarily fresh for
  $\lel(x).Q'$: otherwise, typability of $\lel(x).Q'$ would
  impose $\lel$ and $\lel'$ to be in the same component, which
  would forbid the parallel composition with \out\lel{\lel'}.

  Depending on how $P, Q$ and $Q'$ are written, we can envisage several patterns of
  usages of locks \lel{} and $\lel'$.
    A first example is ownership transfer (or delegation):
  $\lel'\notin\flocks P$, that is, $P$ renounces usage of
  $\lel'$. $\lel'$ can be used in $Q$. Note that typing actually also allows
  $\lel'\in\flocks{Q'}$, i.e., the recipient already knows
  $\lel'$.

  A second possibility could be that \lel{} is used linearly, in the
  sense that there is exactly one acquire on \lel. In this case, we
  necessarily have $\lel\notin\flocks P\cup\flocks{Q'}$---note that a
  release of \lel{} is available in $Q$, by typing. 
  Linearity of \lel{} means here that exactly one interaction takes place at
  \lel. After that interaction, the release on \lel{} contained in $Q$
  is \emph{inert}, in the sense that no acquire can synchronise with it.
  {We believe that this form of linearity can be used to encode binary
    session types in an extended version of \piL, including variants
    and polyadicity, along the lines of~\cite{koba:survey:ext,DBLP:journals/iandc/DardhaGS17,DBLP:conf/ppdp/DardhaGS22}.}
\end{remark}

The type system for \piL{} satisfies the same properties as in \CCSL{}
(Lemma~\ref{l:CCS:typ:struct}):
invariance under
structural congruence, merging components and subject reduction.
We also have progress and deadlock-freedom:

\begin{lemma}[Progress]\label{l:piL:progress}
  Suppose  \typ{\G;\flocks P}P, and $P$ is not structurally equivalent to
  \nil. Then
  \begin{itemize}
  \item either there exists $P'$ such that  $P\red P'$,
          \item or $P\equiv(\new\many{\ell})(\Pi_i \outC{\ell_i}v_i)$         where the $\ell_i$s are pairwise
    distinct.   \end{itemize}
\end{lemma}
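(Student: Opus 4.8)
The plan is to replay the structure of the proof of Lemma~\ref{l:CCSL:prog}, the only genuinely new ingredients being the values carried by locks, the inactive process \nil, and the matching construct $[v=v']P_1,P_2$, all of which are absorbed by a preliminary normalisation step so that the core graph-theoretic argument is unchanged.

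First I would invoke the \piL{} analog of Lemma~\ref{l:CCS:typ:struct}$(i)$ (invariance under $\equiv$) to put $P$ in a normal form. Using the axioms $P|\nil\equiv P$ and $(\new\lel)\nil\equiv\nil$ I remove every top-level occurrence of \nil; using the two matching axioms (applicable precisely because the top-level matches are not guarded by an acquire) I resolve every top-level conditional, iterating since the chosen branch $P_1$ or $P_2$ may itself be a match or contain \nil. As processes are finite this terminates, yielding $P\equiv(\new\many\ell)P_0$ with $\typ{\G;\flocks P}{(\new\many\ell)P_0}$ and $P_0=\prod_i\out{\ell_i}{v_i}\mid\prod_j\inp{\lel_j}{x_j}.P_j$ a parallel composition of top-level releases and acquires. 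Peeling off the restrictions with rule \trans{New} (which reinserts each restricted name into the released-names component of the judgement) shows that $P_0$ is again complete, so I may reason about $P_0$ directly; note $P_0$ is nonempty since $P\not\equiv\nil$.

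If $P_0$ contains no acquire, then $P_0=\prod_i\out{\ell_i}{v_i}$, and the disjointness conditions in \trans{Rel} and \trans{Par}, which guarantee that every lock in \Rs{} is released exactly once, force the subjects $\ell_i$ to be pairwise distinct; this is the terminated alternative. Otherwise I pick an acquire $Q_j=\inp{\lel_j}{x_j}.P_j$. If some top-level release has subject $\lel_j$ we are done, since it synchronises with $Q_j$ (the transmitted object plays no role in enabling the reduction). So assume no top-level release matches an acquire subject. Completeness gives $\lel_j\in\flocks{P_0}=\Rs$, and linearity of releases provides a \emph{unique} available release of $\lel_j$ in $P_0$; by assumption it is not a top-level release, and it cannot lie in $Q_j$ itself, where the release of $\lel_j$ is guarded by the leading acquire on $\lel_j$ and hence not available, so it occurs in some $Q_{j'}$ with $j'\neq j$, uniquely determined. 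In particular $Q_j\sharel{\lel_j}Q_{j'}$.

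Finally I build the directed graph whose vertices are the acquires $Q_j$, with an edge $Q_j\to Q_{j'}$ whenever the available release of $\lel_j$ lies in $Q_{j'}$; by the previous paragraph every vertex has out-degree at least one, so the graph contains a cycle. Along a simple cycle the acquire subjects are pairwise distinct, since two acquires on the same lock would point to the same unique vertex and so cannot both appear in a simple cycle; the cycle therefore yields exactly the configuration forbidden by the \piL{} analog of Lemma~\ref{l:nocycle}, contradicting typability of $P_0$. Hence this case is vacuous and $P_0$ reduces. The main obstacle is not any single step but confirming that the acyclicity machinery transfers from \CCSL: since \Gop{} and the component merging performed by \trans{Acq} and \trans{New} are literally those of \CCSL, and since only subjects matter both for synchronisation and for the sharing relation $\sharel{}$, the proof of Lemma~\ref{l:nocycle} carries over unchanged. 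The one point requiring care is the claim that the available release of an acquired lock sits in a \emph{distinct} top-level component, which rests precisely on the exactly-once release discipline enforced by \trans{Rel} and \trans{Par}.
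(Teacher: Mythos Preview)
Your proof is correct and follows essentially the same approach as the paper, which simply states that the argument for Lemma~\ref{l:CCSL:prog} carries over because the object part of releases is irrelevant to progress. You have made explicit the one genuinely new step---the preliminary normalisation that eliminates top-level \nil{} and resolves top-level matches via the structural axioms---and then reproduced the graph-theoretic cycle argument verbatim, which is exactly what the paper intends.
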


Like in \CCSL, a  deadlocked process in \piL{} is defined as a complete
  process that is stuck.

\begin{proposition}[Deadlock-freedom]\label{p:lock}
  If \typ{\G;\Rs}P and $P\wtra{}P'$, then $P'$ is not deadlocked.
      \end{proposition}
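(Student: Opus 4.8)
The plan is to mirror the structure used for \CCSL: derive deadlock-freedom as a consequence of progress (Lemma~\ref{l:piL:progress}) together with subject reduction. The statement of Proposition~\ref{p:lock} concerns an arbitrary typable process $P$ with $P\wtra{}P'$, and asserts that $P'$ is not deadlocked. Recall that a deadlocked process is a complete process that is stuck, i.e.\ one that cannot reduce but contains at least one pending acquire.

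\begin{proof}
First I would invoke subject reduction (the analogue of Lemma~\ref{l:CCS:typ:struct}$(ii)$ for \piL), which the excerpt states holds for \piL{} as well: from \typ{\G;\Rs}P and $P\wtra{}P'$ we obtain \typ{\G;\Rs}{P'} with $\flocks{P'}=\flocks P$. In particular, reduction preserves both typability and the set of free lock names. The key consequence I would extract is that completeness is preserved: if $P$ is complete, meaning $\Rs=\flocks P$, then so is $P'$, since the typing environment and the owned-release set are unchanged, while the free names stay fixed. This reduces the proposition to showing that a complete typable process is never stuck-with-an-acquire, i.e.\ that if \typ{\G;\flocks{P'}}{P'} then $P'$ is not deadlocked.

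The heart of the argument is then a direct appeal to progress. By Lemma~\ref{l:piL:progress}, a complete typable process $P'$ that is not structurally congruent to \nil{} either reduces or is structurally congruent to a parallel composition of releases $(\new\many\ell)(\Pi_i\outC{\ell_i}{v_i})$ with the $\ell_i$ pairwise distinct. In the first case $P'$ is not stuck, hence not deadlocked. In the second case $P'$ contains no acquire at top-level, and a process of this form is precisely a terminated process in the \piL{} sense, which by definition is not deadlocked. The remaining boundary case $P'\equiv\nil$ is trivially not deadlocked, as it contains no acquire. Hence in every case the complete process $P'$ fails to be deadlocked, which is what we wanted.

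The main subtlety I expect is the bookkeeping that guarantees completeness is genuinely preserved and that the dichotomy of Lemma~\ref{l:piL:progress} covers all cases of $P'$; this is where one must be careful that subject reduction fixes $\Rs$ and $\flocks{P'}$ simultaneously, so that the hypothesis $\Rs=\flocks P$ transfers to $P'$ intact. Beyond this, the argument is a routine combination of subject reduction and progress, exactly paralleling the \CCSL{} development, and no new graph-theoretic reasoning is required here since the acyclicity content has already been packaged into Lemma~\ref{l:piL:progress}.
\end{proof}
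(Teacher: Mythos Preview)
Your proof is correct and matches the paper's approach: the paper likewise derives Proposition~\ref{p:lock} from subject reduction together with progress (Lemma~\ref{l:piL:progress}), exactly as in the \CCSL{} case. One small wrinkle: the hypothesis does not assume $P$ is complete, so your ``reduction'' via preservation of completeness only covers one case; the cleaner phrasing (also implicit in the paper) is to note that $P'$ is typable by subject reduction, and then that a typable $P'$ is never deadlocked---trivially if $P'$ is not complete (by definition of deadlocked), and by Lemma~\ref{l:piL:progress} if it is.
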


The proof of deadlock-freedom is basically the same as for \CCSL. The
reason for that is that although the object part of releases plays a
role in the typing rules, it is not relevant to establish
progress (Lemma~\ref{l:piL:progress}). This is the case because in \piL, it is not possible to store
an acquired lock in another lock (Remark~\ref{r:acquired}).

      It seems difficult to extend the type system in order to allow
processes that transmit the release obligation on a lock. This would
make it possible to type-check, e.g., process
$\lel(\lel').\out{\lel_1}\lel$, that does not release lock
$\lel$ but instead stores it in $\lel_1$. Symmetrically, a process
accessing $\lel$ at $\lel_1$ would be in charge of releasing both
$\lel_1$ and $\lel$.
In such a framework, a process like
$\res{\lel_1}(\out{\lel_1}{\lel}|\lel(x).\out{\lel}x)$ would be deadlocked,
because the inert release \out{\lel_1}{\lel} contains the release
obligation on $\lel_1$. 
The type system in Section~\ref{s:piw} makes it possible to transmit
the obligation to perform a release (and similarly for a wait).

      \begin{remark}
Similarly to Remark~\ref{r:nf}, we have that \typ{\G;\flocks
  P}P implies $P\wtra{}
(\new\many{\ell})(\Pi_i \outC{\ell_i}v_i)$
    where the $\ell_i$s are pairwise
    distinct.
As a consequence, the following holds: if \typ{\G;\flocks P}P,
then for any $\lel\in\flocks P$, $P\wtra{}\stra\mu$, where $\mu$ is a
release of \lel.
This statement would be better suited if infinite computations were
possible in \piL. We leave the investigation of such an extension of
\piL{} for future work.
\end{remark}

\subsection{Behavioural Equivalence in \piL}\label{s:piL:behav}

We introduce typed barbed equivalence (\wbe) and
typed bisimilarity (\bisL) for \piL. We show that \bisL{} is a sound technique
to establish \wbe{}, and present several examples of
(in)equivalences between \piL{} processes.

\subsubsection{Barbed Equivalence and Labelled Semantics for \piL}

A typed relation in \piL{} is a set of quadruples of the form $(\G,\Rs,P,Q)$
such that \typ{\G;\Rs}P and \typ{\G;\Rs}Q. When a typed relation \R{}
contains $(\G,\Rs,P,Q)$, we write \typ{\G;\Rs}{P\R Q}.
We say that a typed relation \R{} is symmetric if \typ{\G;\Rs}{P\R Q} implies
\typ{\G;\Rs}{Q\R P}.

Deadlock-freedom has two consequences
regarding the definition of barbed equivalence in \piL, noted
\wbe.
First, only complete processes should be observed, because intuitively a
computation in \piL{} should not be blocked by an acquire operation
that cannot be executed.

Second, Proposition~\ref{p:lock} entails that all weak barbs in the sense of \Api{}
can always be observed in \piL.
In \Api, a weak barb at $n$ corresponds to the possibility
 to reduce to a process
in which an output at channel
$n$ occurs at top-level.
We  need a stronger notion of barb, otherwise \wbe{} would be
trivial.
That behavioural equivalence in \piL{} is not trivial is shown for
instance by the presence of non-determinism. 
Consider indeed process
 $P_c\defi(\new\lel)\big(\, \lel(x).(\out cx|\out\lel x)~|~
  \lel(y).\out\lel\ffalse~|~ \out\lel\ttrue\,\big)$. 
  Then $P_c\wtra{}\out c\ttrue$ and $P_c\wtra{}\out c\ffalse$ (up to the
  cancellation of an inert process of the form
  $(\new\lel)\out\lel\bval$). 
We therefore  include the object part of
releases in barbs. 
We write $P\barb{\out\ell{\ell'}}$ if $P\stra{\out\ell{\ell'}}$, and
$P\barb{\bout\ell\new}$ if 
$P\stra{\bout\ell{\ell'}}$.
We use $\eta$ to range over barbs,  writing $P\barb\eta$; 
the weak version of the predicate, defined as $\wtra{}\barb\eta$, is written
$P\wbarb\eta$.

\begin{definition}[Barbed equivalence in \piL, \wbe]\label{d:wbe}
  A symmetric typed relation \R{} is a \emph{typed barbed bisimulation} if
  \typ{\G;\Rs}{P\R Q} implies the three following properties:
  \begin{enumerate}
  \item whenever
    $P, Q$ are complete and
    $P\stra{}P'$, there is $Q'$ s.t.\
    $Q\wtra{}Q'$ and \typ{\G;\Rs}{P'\R Q'};
  \item for any $\eta$, if $P, Q$ are complete and
    $P\barb\eta$ then 
   $Q\wbarb\eta$;
 \item for any  $E,\G',\Rs'$ s.t. \typ{\G';\Rs'}{E[P]} and
   \typ{\G';\Rs'}{E[Q]},
 and $E[P], E[Q]$ are complete,
 we have \typ{\G';\Rs'}{E[P]\,\R\, E[Q]}.
  \end{enumerate}
  \emph{Typed barbed equivalence}, written \wbe{}, is the greatest typed
  barbed bisimulation.
\end{definition}

\begin{lemma}[Observing only booleans]\label{l:obs:bool}
  We use $\obs,\obs',\dots$ for lock names that are used to store
  boolean values. We define \wbeo{} as the equivalence defined as in
  Definition~\ref{d:wbe}, but restricting the second clause to barbs
  of the form
  \barb{\out\obs\bval} and
  \wbarb{\out\obs\bval}.
    Relation \wbeo{} coincides with \wbe.   
\end{lemma}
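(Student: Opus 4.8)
The plan is to prove the two inclusions separately. The inclusion $\wbe\subseteq\wbeo$ is immediate: clauses~1 and~3 of Definition~\ref{d:wbe} are identical for \wbe{} and \wbeo, and a relation that matches \emph{all} barbs a fortiori matches the boolean ones, so every typed barbed bisimulation is also a \wbeo-bisimulation; by maximality of \wbeo{} we get $\wbe\subseteq\wbeo$. The content of the lemma is the converse, $\wbeo\subseteq\wbe$, for which I would show that \wbeo{} is itself a typed barbed bisimulation in the sense of Definition~\ref{d:wbe}. Clauses~1 and~3 hold by definition of \wbeo, so the only thing to establish is the \emph{general} clause~2: if $\typ{\G;\Rs}{P\,\wbeo\,Q}$ with $P,Q$ complete and $P\barb\eta$ for an arbitrary barb $\eta$, then $Q\wbarb\eta$. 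When $\eta$ is a boolean barb this is exactly clause~2 of \wbeo; the work is to reduce the remaining barbs, which occur on locks storing locks, to boolean ones, using the context closure that \wbeo{} already enjoys.

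For a barb $\eta=\out\ell{\ell'}$ with $\ell'$ a free lock, I would use the tester $T\defi\inp\ell x.(\out\ell x\,|\,([x=\ell']\out\obs\ttrue,\out\obs\ffalse))$, where \obs{} is a fresh lock storing booleans, placed in the context $E\defi\hole\,|\,T$. This tester acquires \lel, immediately re-releases it carrying the value read (so that the obligation introduced by rule \trans{Acq} is discharged and completeness and all other barbs are preserved), and exposes \out\obs\ttrue{} exactly when that value equals $\ell'$. The step I expect to be the main obstacle is typability of $E[P]$: a priori $T$ mentions both \lel{} and $\ell'$, which also occur together in $P$, since rule \trans{Rel} forces the subject and object of a release into the same component, and two parallel processes may share at most one lock per component. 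The observation that unblocks this is that $\ell'$ occurs in $T$ only inside a comparison, and comparison is not a form of \emph{accessing} a lock: by the convention on rule \trans{Mat} we have $\ell'\notin\dom{\G_T}$, so $P$ and $T$ share only \lel{} and the composition $\Gop{\G_P}{\G_T}$ is defined. With \obs{} fresh one checks that $E[P]$ and $E[Q]$ carry the same type and are complete whenever $P,Q$ are. By construction $E[P]\wbarb{\out\obs\ttrue}$ iff $P\wbarb{\out\ell{\ell'}}$, the only way to expose \out\obs\ttrue{} being for $T$ to acquire \lel{} with object $\ell'$. Hence from $P\barb{\out\ell{\ell'}}$, clause~3 of \wbeo{} (giving $\typ{\G';\Rs'}{E[P]\,\wbeo\,E[Q]}$) and clause~2 of \wbeo{} (boolean barb) we obtain $E[Q]\wbarb{\out\obs\ttrue}$, that is $Q\wbarb{\out\ell{\ell'}}$.

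For a bound-output barb $\eta=\bout\ell\new$ I would use the same idea, with a tester that exposes \out\obs\ttrue{} precisely when the value read on \lel{} differs from every free lock of the appropriate sort. Writing $a_1,\dots,a_m$ for those free names, take $T'\defi\inp\ell x.(\out\ell x\,|\,C[x])$ with $C[x]$ the nested comparison $[x=a_1]\out\obs\ffalse,([x=a_2]\out\obs\ffalse,(\dots,\out\obs\ttrue))$. Again the $a_i$ appear only in comparisons, so $\dom{\G_{T'}}=\set{\ell,\obs}$ and $\hole\,|\,T'$ instantiated with $P$ is typable and complete. Since a released object is either a free name or an extruded restricted name, one has $E'[P]\wbarb{\out\obs\ttrue}$ iff $P$ can release \lel{} with an object distinct from all its free names, i.e.\ iff $P\wbarb{\bout\ell\new}$, and the conclusion follows as before via clauses~3 and~2 of \wbeo. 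Since boolean barbs are handled directly and the two families above exhaust the barbs on lock-storing names, this establishes the general clause~2 and hence $\wbeo\subseteq\wbe$, completing the proof.
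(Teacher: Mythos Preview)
The paper states Lemma~\ref{l:obs:bool} without proof, so there is no reference argument to compare against. Your proposal is correct and is the natural argument for this kind of result.

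The crux of your argument---and the place where \piL-specific care is needed---is typability of the testing context. Your observation that the comparands in rule \trans{Mat} need not lie in $\dom\G$ is exactly what makes the construction go through: the tester $T=\inp\ell x.(\out\ell x\mid[x=\ell']\out\obs\ttrue,\out\obs\ffalse)$ can be typed with $\dom{\G_T}=\set{\ell,\obs}$, so $P$ and $T$ share only $\ell$ and $\Gop{\G_P}{\G_T}$ is defined despite $\ell$ and $\ell'$ necessarily lying in the same component of $\G_P$. Completeness of $E[P]$ follows since $\Rs_T=\set{\obs}$ and $\ell'\in\flocks T\subseteq\flocks P=\Rs_P$. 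The same reasoning covers the bound-output tester, where all the $a_i$ occur only in comparisons.

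The characterisation $E[Q]\wbarb{\out\obs\ttrue}$ iff $Q\wbarb\eta$ is also sound. The tester performs a single acquire on $\ell$; in any reduction of $E[Q]$ exhibiting the boolean barb, the segment before that acquire is a reduction of $Q$ alone, and the acquire itself witnesses an output of $Q$'s derivative on $\ell$ with the required object. For the bound case you implicitly use that $\flocks{Q'}=\flocks Q=\Rs$ (Lemma~\ref{l:CCS:typ:struct} for \piL), so an object outside $\set{a_1,\dots,a_m}$ can only come from a bound output. One small step you leave implicit is that transferring the \emph{weak} boolean barb from $E[P]$ to $E[Q]$ uses clause~1 of \wbeo{} (to chase the reductions) together with clause~2; this is routine.
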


To define typed bisimilarity,
we introduce \emph{type-allowed transitions}. The terminology
means that we select among the untyped transitions those that are
fireable given the constraints imposed by types.

\begin{definition}[Type-allowed transitions]\label{d:typ:trans:piL}
	When $\typ{\Gamma;\Rs}{P}$, we write
        $\rconf{\Gamma;\Rs}{P}\stra{\mu}\rconf{\Gamma';\Rs'}{P'}$ 
	if $P\stra{\mu}P'$ and one of the following holds:
	\begin{enumerate}
        \item $\mu=\tau$, in which case $\Rs'=\Rs$ and $\G'=\G$;

        \item $\mu=\out\ell v$, in which case $(\g,\ell,v)\in\G$
          for some \g, and  $\Rs',\ell=\Rs$, $\G'=\G$;

        \item\label{it:typ:trans:bout} $\mu=\bout\ell{\ell'}$, in
          which case
          $\G = \G_0,(\g,\ell)$ for some $\G_0,\g$, 
          $\G' = \G_0, (\g,\ell,\ell')$, and we have
           $\Rs',\ell=\Rs,\ell'$;
         \item\label{typed:trans:acq} $\mu=\inp\ell{v}$, in which case there are
           $\G_0,\Rs_0$ s.t.\ \typ{\G_0;\Rs_0}{P|\out\lel{v}}, and
           $\G'=\G_0, \Rs'=\Rs_0$.
        \end{enumerate}
        \end{definition}

        In
        item~\ref{it:typ:trans:bout},
        $\ell$
        is removed from the \Rs{} component, and $\lel'$ is added:
it is $P'$'s duty to perform the release of  $\lel'$, the obligation
is not transmitted.
An acquire transition involving a higher-order lock
  merges two distinct components in the typing environment:
  if
  $[\G_0,(\g,\ell),(\g',\ell');\Rs;P]\stra{\inp\lel{\lel'}}[\G';\Rs';P']$
  (item~\ref{typed:trans:acq} above),
  then
  $\G' = \G_0,(\g\uplus\g'\uplus\set{\ell,\ell'})$ and 
          $\Rs'=\Rs,\ell$ (and in particular  $\lel\notin \Rs$).

\begin{lemma}[Subject Reduction for type-allowed transitions]\label{l:SR}
  If $[\G;\Rs;P]\stra\mu[\G';\Rs';P']$, then \typ{\G';\Rs'}{P'}.
\end{lemma}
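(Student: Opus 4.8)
The plan is to proceed by case analysis on the label $\mu$, following the four clauses of Definition~\ref{d:typ:trans:piL}, reducing each case either to ordinary subject reduction for \piL{} (the analogue of Lemma~\ref{l:CCS:typ:struct}$(ii)$, which I assume together with invariance under $\equiv$) or to an inversion of the typing derivation of $P$. Throughout I would rely on the standard characterisation of early transitions via structural congruence: a top-level output exposes $P\eq(\new\many n)(\out\ell v| P_0)$ (with $v$ possibly a restricted name, extruded in the bound case), a top-level acquire exposes $P\eq(\new\many n)(\inp\ell x.Q| P_0)$ with $\ell\notin\many n$, and a $\tau$-step coincides with a reduction.

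The two cheap cases are $\mu=\tau$ and $\mu=\inp\ell v$. For $\mu=\tau$ we have $\G'=\G$, $\Rs'=\Rs$ and $P\red P'$, so $\typ{\G';\Rs'}{P'}$ follows from subject reduction. For $\mu=\inp\ell v$, clause~\ref{typed:trans:acq} hands us directly a witness $\typ{\G_0;\Rs_0}{P|\out\ell v}$ with $\G'=\G_0$ and $\Rs'=\Rs_0$. Here I would observe that the supplied release synchronises with the acquire firing the transition: writing $P\eq(\new\many n)(\inp\ell x.Q| P_0)$, scope extrusion gives $P|\out\ell v\eq(\new\many n)(\out\ell v|\inp\ell x.Q| P_0)\red(\new\many n)(Q\sub vx| P_0)\eq P'$. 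Applying subject reduction to the witness then yields $\typ{\G_0;\Rs_0}{P'}$, i.e. $\typ{\G';\Rs'}{P'}$. So the acquire case collapses onto the $\tau$ case, which is the clean pay-off of defining clause~\ref{typed:trans:acq} through $P|\out\ell v$.

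The substantive cases are the free output $\mu=\out\ell v$ and the bound output $\mu=\bout\ell{\ell'}$. I would handle both by inverting the typing of $P\eq(\new\many n)(\out\ell v| P_0)$. Peeling off the restrictions with \trans{New} and the parallel composition with \trans{Par}, and using that \trans{Rel} forces the subject and object of a release to sit in one component, I obtain a derivation of $\typ{\G_2;\Rs_2}{P_0}$, the component $(\g,\ell,v)$, and the fact that the release charges $\ell$, i.e. $\ell\in\Rs$. For the free output ($\G'=\G$, $\Rs'=\Rs\singmin\ell$) I must rebuild $\typ{\G;\Rs\singmin\ell}{P'}$; for the bound output the object $\ell'$ is a restricted name, so one \trans{New} is ``undone'', turning $(\g,\ell,\ell')$ from a bound-name component into an interface component and moving $\ell'$ into the obligations, which matches $\G'=\G_0,(\g,\ell,\ell')$ and $\Rs',\ell=\Rs,\ell'$. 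In both cases $P'$ is recovered from $P_0$ by restoring the restrictions in $\many n$ (minus $\ell'$ in the bound case) via \trans{New} and \trans{Par}.

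The main obstacle is the bookkeeping on environments, and it is sharpest exactly when $\ell$ (and, for free output, $v$) no longer occurs in the residual. The type-allowed transition insists that $\G'$ still carries the component containing $\ell$, whereas the derivation extracted for $P_0$ need not mention $\ell$ at all — for instance when $P\eq\out\ell v$ and $P'\eq\nil$. Reconciling the two requires a weakening property: lock names not accessed by a process may be kept inside an existing component (and, away from lock-connected subterms, added as a fresh component) without affecting typability, with the empty-obligation process $\nil$ typable under any such padded environment. I would establish this weakening lemma by induction on typing derivations — noting that \trans{Acq} forces single-component environments, so the padding must be threaded through the unique component there — and then use it to absorb the discrepancy between $\G_2$ and $\G'$, the disjointness conditions hidden in $\Gop{\cdot}{\cdot}$ and $\Gapp{}$ guaranteeing that the padded names do not collide with those actually used. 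Once weakening is in place, the reassembly of the output cases, and hence the whole lemma, is routine.
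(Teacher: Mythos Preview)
The paper does not prove this lemma, so there is nothing to compare against directly; your case analysis on $\mu$ is the natural route, and the reductions of the $\tau$ and $\inp\ell v$ cases to ordinary subject reduction are clean and correct.

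The one point that needs sharpening is the weakening step in the output cases. You propose a weakening lemma proved by induction on typing derivations, restricted so that fresh components may only be added ``away from lock-connected subterms'' because \trans{Acq} concludes with a single component. This restricted form is not sufficient: take $P=\out\ell v\mid\inp{\ell''}x.\out{\ell''}x$ with $\ell,v,\ell''$ pairwise unrelated, typed under $\G=\set{\set{\ell,v},\set{\ell''}}$; after the free output the residual is (up to $\equiv$) the input prefix, which is lock-connected, yet $\G'=\G$ still carries the separate component $\set{\ell,v}$. Your weakening does not apply here, and an induction on derivations really does get stuck at \trans{Acq}. The fix is already latent in your remark about $\nil$: the paper tacitly uses $\typ{\G_0;\emps}\nil$ for arbitrary $\G_0$ (witness the typed law $\typ{\set{\set\lel};\emps}{\lel(x).\out\lel x\wbe\nil}$), and from this, \trans{Par}, and invariance under $P\equiv P\mid\nil$ one obtains \emph{unrestricted} weakening --- $\typ{\G_1;\Rs}{P}$ implies $\typ{\Gop{\G_1}{\G_2};\Rs}{P}$ for every $\G_2$ such that the composition is defined. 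This is a corollary of invariance under $\equiv$, not a separate induction on derivations, and it is exactly what your reassembly of the free and bound output cases needs; the lock-connectedness caveat can then simply be dropped.
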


\begin{definition}[Typed bisimilarity, \bisL]\label{d:bisL}
  A typed relation \R{} is a typed bisimulation if \typ{\G;\Rs}{P\R Q}
  implies that whenever $[\G;\Rs;P]\stra\mu[\G';\Rs';P']$, we have
  \begin{enumerate}
  \item either  $Q\wtra{\hat\mu}Q'$ and
    \typ{\G';\Rs'}{P'\R Q'} for some $Q'$
      \item or $\mu$ is an acquire $\ell(v)$,
    $Q|\out\ell{v}\wtra{} Q'$ and
    \typ{\G';\Rs'}{P'\R Q'} for some $Q'$,
  \end{enumerate}
  and symmetrically for the type-allowed transitions of $Q$.

\emph{Typed bisimilarity}, written \bisL, is the largest typed
bisimulation.
\end{definition}
We write \typ{\G;\R}{P\bisL Q} when $(\G;\Rs,P,Q)\in\,\bisL$.
If \typ{\G;\Rs}{P\bisL Q} does not hold, we write \typ{\G;\Rs}{P\nbisL
  Q}, and similarly for \typ{\G;\Rs}{P\nwbe Q}.

Proposition~\ref{p:piL:sound} below states that relation \bisL{} provides a sound proof technique for \wbe. The main
property to establish this result is that \bisL{} is preserved by
parallel composition:
 \typ{\G_0;\Rs_0}{P\bisL Q} implies that for all $T$, whenever
  \typ{\G;\Rs}{P|T} and \typ{\G;\Rs}{Q|T}, we have
  \typ{\G;\Rs}{P|T\bisL Q|T}.

\begin{proposition}[Soundness]\label{p:piL:sound}
  For any $\G,\Rs,P,Q$, if \typ{\G;\Rs}{P\bisL Q}, then 
  \typ{\G;\Rs}{P\wbe Q}.
\end{proposition}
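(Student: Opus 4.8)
The plan is to prove soundness by showing that $\bisL$ is itself a \emph{typed barbed bisimulation}; since by Definition~\ref{d:wbe} $\wbe$ is the greatest such relation, this gives at once $\bisL\subseteq\wbe$, i.e.\ $\typ{\G;\Rs}{P\bisL Q}$ implies $\typ{\G;\Rs}{P\wbe Q}$. I take $\R=\bisL$, which is symmetric because it is the largest typed bisimulation, and I verify the three clauses of Definition~\ref{d:wbe}: reduction matching, barb matching, and closure under execution contexts.

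Clauses~1 and~2 are immediate from Definition~\ref{d:bisL}, because the transitions they concern are type-allowed. A reduction $P\stra{\tau}P'$ is a type-allowed $\tau$-transition (item~1 of Definition~\ref{d:typ:trans:piL}, with $\G'=\G$ and $\Rs'=\Rs$), so the first clause of Definition~\ref{d:bisL} yields $Q\wtra{}Q'$ with $\typ{\G;\Rs}{P'\bisL Q'}$ (the hat on $\tau$ permits the empty move), which is exactly clause~1 of Definition~\ref{d:wbe}. For clause~2, a barb $P\barb{\out\ell{\ell'}}$ (resp.\ $P\barb{\bout\ell\new}$) comes from a top-level output, hence from a type-allowed (bound-)output transition by items~2 and~3 of Definition~\ref{d:typ:trans:piL}; bisimilarity matches it via $Q\wtra{\out\ell{\ell'}}Q'$ (resp.\ its bound variant), which is precisely $Q\wbarb{\out\ell{\ell'}}$. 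The acquire clause of Definition~\ref{d:bisL} plays no role here since barbs are outputs, and by Lemma~\ref{l:obs:bool} it would even suffice to treat boolean barbs.

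The substance lies in clause~3, closure under execution contexts $E::=\hole\OR E|P\OR(\new\lel)E$. As every such $E$ is built from parallel composition and restriction, it suffices to prove $\bisL$ preserved by these two operators. The restriction case is routine: type-allowed transitions of $(\new\lel)R$ are inherited from those of $R$ (an output on $\lel$ becoming a bound output, covered by item~3 of Definition~\ref{d:typ:trans:piL}), so $\{(\G',\Rs',(\new\lel)P,(\new\lel)Q):\typ{\G;\Rs}{P\bisL Q}\}$ is readily a typed bisimulation. The parallel case is exactly the congruence property highlighted before this proposition. For it I would exhibit the candidate
\[\mathcal S=\{(\G,\Rs,P|T,Q|T):\typ{\G_0;\Rs_0}{P\bisL Q},\ \typ{\G;\Rs}{P|T},\ \typ{\G;\Rs}{Q|T}\}\]
and prove it a typed bisimulation up to $\equiv$, using Lemma~\ref{l:SR} to keep configurations well typed. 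The core is a case analysis on a type-allowed transition of $P|T$: a move of $T$ alone (matched by $Q|T$ on $T$), a move of $P$ alone (matched by moving $P$ via $\bisL$ and recomposing with $T$), or a synchronisation. When $P$ outputs and $T$ acquires, I match $P$'s output directly and let $T$ consume it; the delicate sub-case is when $P$ acquires $\inp\ell v$ and $T$ releases $\out\ell v$. Here the $\tau$ of $P|T$ is matched by precisely the second clause of Definition~\ref{d:bisL}: from the acquire transition of $P$ we get $Q|\out\ell v\wtra{}Q'$ with $P'\bisL Q'$, and $Q|\out\ell v$ is exactly what $Q|T$ exposes once $T$ has released $v$.

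The main obstacle is this parallel-composition congruence, and within it the bookkeeping of typing environments. Unlike in the untyped calculus, a move is available only when type-allowed, so I must argue that a transition of $P|T$ projects to a type-allowed transition of its active component with the correct $(\G,\Rs)$, and that recomposing the residuals with the matching move of $Q$ (or of $Q|\out\ell v$) lands back in $\mathcal S$ under a valid typing. This forces reconciling the operator $\Gop{\cdot}{\cdot}$, the side-conditions of $\connname$, and the disjointness of the two $\Rs$-components with the environment updates dictated by the type-allowed LTS (the merging of two components on a higher-order acquire, and the migration of the released name in and out of $\Rs$ on a bound output). Scope extrusion of a private object adds a further layer: the shared-name discipline enforced by $\connname$ constrains which names the two components may have in common, and checking that a bound output respects this discipline on both $P|T$ and $Q|T$ is the most error-prone point of the argument.
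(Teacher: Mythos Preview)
Your proposal is correct and follows the same approach as the paper, which explicitly identifies preservation of $\bisL$ under parallel composition as the main property needed and leaves the remaining details implicit. Your elaboration of clauses~1--2 via type-allowed transitions and your treatment of the acquire/release synchronisation using the second clause of Definition~\ref{d:bisL} are exactly the expected steps, and your discussion of the typing-environment bookkeeping accurately flags where the work lies.
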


The main advantage in using \bisL{} to establish equivalences for
\wbe{} is that we can reason directly on processes, even if they are
not complete.

\subsubsection{Examples of Behavioural Equivalence in \piL}

\begin{example} We discuss some equivalences for \wbe.

The equivalence \typ{\set{\set\lel};\emps}{\lel(x).\out\lel x\wbe\nil}, which is
typical of \Api, holds in
 \piL.
This follows
directly from the definition of typed bisimilarity, and soundness (Proposition~\ref{p:piL:sound}).

\smallskip

We now let
$P\defi \inp\lel x.(\out{\lel_0}\ttrue|\out\lel x)$ and
  $Q\defi \out{\lel_0}\ttrue$, and consider whether we can
  detect the presence of a  `forwarder' at \lel{} when its behaviour is interleaved with another
  process. $P$ and $Q$ have different barbs---they are obviously not complete.
        It turns out that \typ{\set{\set{\lel,\lel_0}};\set{\lel_0}}{\inp\lel
  x.(\out{\lel_0}\ttrue|\out\lel x)\nwbe\out{\lel_0}\ttrue}. Indeed,
let us consider the context
  $$
  E\, \defi\, \hole~|~ \lel_0(y).\inp w\_.(\out w\ttrue|\out{\lel_0}y)~|~
  \inp{w'}\_.(\out{w'}\ttrue|\out\lel v)~|~\out
  w\ffalse|\out{w'}\ffalse
  ,
  $$
  where $w,w'$ are fresh names and $v$ is a value of the appropriate sort.
  We have $E[Q]\wtra{}Q'$ with $Q'\not\barb{\out w\ffalse}$ and
  $Q'\barb{\out{w'}\ffalse}$. On the other hand, for any $P'$ s.t.\
  $E[P]\wtra{}P'$, if $P'\not\barb{\out w\ffalse}$, then
  $P'\not\barb{\out{w'}\ffalse}$.

  Contexts like $E$ above make it possible to detect when the process in
  the hole has some interaction (here, with locks \lel{} and $\lel_0$).

  Using similar ideas, we can prove that
  $$\typ{\G;\Rs}{\inp{\lel_1}x.\inp{\lel_2}y.P ~\nwbe~
    \inp{\lel_2}y.\inp{\lel_1}x.P}
  \qquad\text{ for appropriate \G{} and \Rs.}$$
  Indeed, let us define $E_w\, \defi\, \out w\ffalse~|~\inp w{\_}.(\hole|\out
  w\ttrue)$, where $\_$ stands for an arbitrary lock name, that is not
  used. We can use the context
  $\hole~|~ E_{w_2}[\out{\ell_2}{v_2}]~|~ \out{\ell_1}{v_1}
  ~|~ \ell_1(z).E_{w_1}[\out{\ell_1}z]$, for fresh names $w_1,w_2$ and
  appropriate values $v_1, v_2$, to detect
    the order in which acquires on $\lel_1$ and $\lel_2$
  are made.
              \end{example}

In the next two examples, we show equivalences that hold because we
work in a typed setting.
\begin{example}
  Suppose \typ{\G;\Rs,\lel}{\inp\ell x.P~|~ \inp{\ell'} y.(\out\ell v~|~
    Q)}. Then we have
  \begin{mathpar}
    \typ{\G;\Rs,\lel}{}\inp\ell x.P~|~ \inp{\ell'} y.(\out\ell v~|~ Q)
    ~\bisL~
    \inp{\ell'}y.(\out\ell v~|~Q~|~ \inp\ell x.P),
  \end{mathpar}
  \noindent because intuitively the acquire on $\ell$ cannot be
  triggered by the context, due to the presence of a release at
  \lel{} in the process.
  (We  remark in passing that \typ{\G;\Rs,\lel}{
\inp\ell x.P~|~ \inp{\ell'} y.(\out\ell v~|~ Q)}
iff \typ{\G;\Rs,\lel}{
    \inp{\ell'}y.(\out\ell v~|~Q~|~ \inp\ell x.P)}, and in this case
  \G{} contains a component of the form $(\g,\lel,\lel',v)$.)

  \medskip
  
  This law can be generalised as follows. We say that $\ell$ is
  \emph{available} in a context $C$ if the hole does not occur in $C$
  neither under a binder for \lel, nor under an acquire on \lel.
      So for instance $\ell$ is not available in $(\new\ell)\hole$, in
  $\inp{\ell_0}\ell.\hole$ or in $\inp\ell x.\hole$, and $\ell$ is available
  in $\inp\ell x.\out\ell x~|~\out\ell v~|~\hole$.
If \lel{} is available in $C$, then
  \begin{mathpar}
    \typ{\G;\Rs}{}\inp\ell x.P~|~C[\out\ell v] \bisL
    C[\out\ell v~|~\inp\ell x.P]
    \qquad\text{ for appropriate \G{} and \Rs.}
  \end{mathpar}
\end{example}

\begin{example}
  Consider the following processes:
    \begin{mathpar}
    \begin{array}{l}
    P_1~~=~~
    (\new\ell_1)\big(\,\ell_1.\ell_2.(\outC{\ell_1}|\outC{\ell_2})
    ~|~ \ell(x).\ell_1.x.(\outC{\ell_1}|\outC x|\out\lel x)~|~
    \outC{\ell_1}|\outC{\ell_2}
    \,\big)
      \\
    P_2~~=~~
    (\new\ell_1)\big(\,\ell_1.\ell_2.(\outC{\ell_1}|\outC{\ell_2})
    ~|~ \ell(x).x.\ell_1.(\outC{\ell_1}|\outC x|\out\lel x)~|~
    \outC{\ell_1}|\outC{\ell_2}
    \,\big)
    \end{array}
  \end{mathpar}
  Here we use a CCS-like syntax,
    to ease readability. This notation means that acquire operations are
  used as forwarders, i.e., the first component of $P_1$ and $P_2$
  should be read as
  $\lel_1(y_1).\lel_2(y_2).(\out{\lel_1}{y_1}|\out{\lel_2}{y_2})$.
    Moreover, the two
  releases available at top-level are
  $\out{\lel_1}\ttrue|\out{\lel_2}\ttrue$, and similarly for \out x\ttrue{} (the reasoning also holds if
  $\lel_1$ and $\lel_2$ are higher-order locks).

  In the pure $\pi$-calculus, $P_1$ and $P_2$ are not equivalent,
  because $\lel_2$ can instantiate $x$ in the acquire on \lel.
  We can show \typ{\set{\set{\lel_2,\lel}};\set{\lel_2}}{P_1\bisL P_2} in \piL,
  because the transition \stra{\inp\lel{\lel_2}} is ruled out by the
  type system.
\end{example}

\section{\piw, a Leak-Free Asynchronous $\pi$-Calculus}\label{s:piw}

\subsection{Adding Lock Deallocation}\label{s:def:piw}

\piw{} is obtained from \piL{} by adding the \emph{wait construct}
$\wait\lel{\lel'}.P$ to the grammar of \piL. 
As announced in Section~\ref{s:intro}, the following reduction rule
describes how wait interacts with a release:
\begin{mathpar}
  \inferrule{ }{(\new\lel)(\out\lel v~|~\wait\lel{\lel'}.P) ~\red~
    P\sub v{\lel'}}
  ~\lel\notin\flocks P
\end{mathpar}
The wait instruction deallocates the lock. The continuation may use
$\lel'$, the final value of the lock. We say that \wait\lel{\lel'} is a \emph{wait
on \lel}, and $\lel'$ is bound in $\wait\lel{\lel'}.P$.

Types in \piw, written $\ttyp,\ttyp',\dots$, are defined by
$\ttyp~::=~ \bbool\OR \tlockw\ttyp{rw}$, 
and \emph{typing hypotheses} are written $\lel:\ttyp$.
In $\lel:\tlockw\ttyp{rw}$, $rw$ is called the \emph{usage} of \lel, and
$r,w\in\set{0,1}$ are the release and wait \emph{obligations}, respectively, on lock \lel.
So for instance a typing hypothesis of the form
$\lel:\tlockw\ttyp{10}$ means that \lel{} must be used to perform a
release and cannot be used to perform a wait.
An hypothesis $\lel:\tlockw\ttyp{00}$ means that \lel{} can only be used to perform
acquire operations.
This  structure for types makes it possible  to transmit
the wait and release obligations on a given lock name via
higher-order locks.

Our type system ensures that locks are properly deallocated.
In contrast to \piL, this allows acquired locks to be stored without
creating deadlocks. For example, a process like $\res{\lel_1}(\out{\lel_1}{\lel}|\lel(x).\out{\lel}x)$
is deadlocked if $\lel_1$ stores the release obligation of $\lel$;
however, it cannot be typed
as it lacks the wait on $\lel_1$. Adding a wait, e.g. $\wait{\lel_1}{\lel}.\out{\lel}v$
removes the deadlock.

Typing environments have the same structure as in Section~\ref{s:piL},
except that components \g{} are sets of {typing hypotheses}
instead of simply sets of lock names.
\dom\G{} is defined as the set of lock names for which \G{} contains a
typing hypothesis.
We write $\G(\lel)=\ttyp$ if the typing hypothesis
$\lel:\ttyp$ occurs in \G.

We reuse the notation for composition of typing environments. \Gop{\G_1}{\G_2} is defined like in
Section~\ref{s:CCSL}, using the \connname{} operator, to avoid
cyclic structures in the sharing of lock names.
Additionally, when merging components, we  compose typing
hypotheses. For any \lel, if
$\lel:\tlockw{\ttyp_1}{r_1w_1}\in\dom{\G_1}$ and
$\lel:\tlockw{\ttyp_2}{r_2w_2}\in\dom{\G_2}$, the typing hypothesis
for \lel{} in
\Gop{\G_1}{\G_2} is $\lel:\tlockw\ttyp{(r_1+r_2)(w_1+w_2)}$, and is
defined only if $\ttyp=\ttyp_1=\ttyp_2$, 
$r_1+r_2\leq 1$ and $w_1+w_2\leq 1$.

\begin{figure}[t]
  \centering
\begin{mathpar}
    \inferrule[Acq-w]{ \typ{\G,(\g, \lel:\tlock\ttyp_{1w},\lel':\ttyp) }P
  }{ \typ{\set{\Gapp(\G) \uplus (\g,\lel:\tlock\ttyp_{0w})}}{\inp\lel{\lel'}.P} }

  \inferrule[Rel-w]{   }{
    \typ{ \G_{00},(\g_{00},\lel:\tlock{\ttyp}_{10},v:\ttyp)
        }{\out\lel{v}}
    }
    
  \inferrule[Wait-w]{ \typ{\set{\g,\lel':\ttyp}} P }{
    \typ{\set{\g,\lel:\tlock\ttyp_{01}}}{\lel((\lel')).P} }

  \inferrule[New-w]{\typ{\G,(\g,\lel:\tlock{\ttyp}_{11})}P }{
    \typ{\G,\g}{(\new\lel)P} }

  \inferrule[Par-w]{\typ{\G_1}{P_1}\and \typ{\G_2}{P_2}
  }{
    \typ{\Gop{\G_1}{\G_2}}{P_1|P_2} }

  \inferrule[Nil-w]{ }{\typ{
    \emps}\nil }
  
  \inferrule[Mat-w]{\typ{\G}{P_1}     \and\typ{\G}{P_2}
  }{\typ{\G}{[v=v']P_1,P_2} }   
\end{mathpar}
\caption{Typing rules for \piw
}
  \label{f:piw}
\end{figure}

The typing rules are given in Figure~\ref{f:piw}. The rules
build on the rules for \piL, and rely on usages to
control the release and wait obligations.
In particular, the set \Rs{} in Figure~\ref{f:piL} corresponds to the
set of locks whose usage is of the form $1w$ in this system.
To type-check an acquire, we can have usage $00$, but also $01$, as
in, e.g., $\inp\lel{\lel'}.(\out\lel{\lel'}|\wait\lel x.P)$. In rule
\trans{Rel-w}, we impose that all typing hypotheses in $\G_{00}$ (resp.\ $\g_{00}$)
have the form $\ell:\tlockw{\ttyp}{00}$.

Several notions introduced for the type system of
  Section~\ref{s:piL} have to be adapted in the setting of \piw.
  While in Section~\ref{s:piL} we simply say that a lock \lel{} is
  available, here we distinguish whether a release of \lel{} or a wait
  on \lel{} is available.
If $P$ has a subterm of the form $\wait\lel x.Q$ that does not occur
under a binder for \lel, we say that a wait on \lel{} is
\emph{available} in $P$.
If \out\lel v occurs in some process $P$ and this occurrence is
neither under a binder for \lel{} nor under an acquire on \lel, we say
that a release of \lel{} is \emph{available} in $P$.
In addition, a release of \lel{} (resp.\ wait on \lel) is available in $P$ also if $P$
contains a release of the form \out{\lel_0}\lel, which does not occur under a
binder for \lel, and if $\lel$'s type is of the form
\tlockw\ttyp{1w} (resp.\ \tlockw\ttyp{r1}).

Like in \piL, a deadlocked process in \piw{} is a complete
  process that is stuck. The notion of complete process has to be
  adapted in order to take into account the specificities of \piw.
  First, the process should not be stuck just because a restriction is
  missing in order to trigger a name deallocation. Second, we must
  consider   the fact that release and wait obligations can be
  stored in locks in \piw.
    As a consequence, when defining complete processes in \piw, we
  impose some constraints on the free lock names of processes.

In \piw, we say that
  \emph{\G{} is complete} if for any $\lel\in\dom\G$,
  either $\G(\lel)=\tlockw\bbool{10}$ or
  $\G(\lel)=\tlockw{\tlockw\ttyp{00}}{10}$ for some \ttyp.
{
  To understand this definition, suppose \typ\G P with \G{}
complete. Then we have, for any free lock name \lel{} of $P$: $(i)$  the
release of $\ell$ is available in $P$; $(ii)$ this release does not
carry any 
obligation; $(iii)$ the wait on \lel{} is  \emph{not} 
available in $P$. The latter constraint means that if a $P$ contains a
wait on some lock, then this lock should be restricted.
}

The notion of leak-freedom we use is inspired
  from~\cite{DBLP:journals/pacmpl/JacobsB23}. In our setting, a
  situation where some lock
  \lel{} is released and will never be acquired again can be seen as a
form of memory leak. 
We say that $P$ \emph{leaks} $\lel$ if $P \equiv \res{\lel}(P' | 
  \out{\lel}{v})$ with $\lel\notin\flocks{P'}$. 
$P$ \emph{has
    a leak} if $P$ leaks \lel{} for some \lel, and is \emph{leak-free}
  otherwise.

\begin{lemma}[Progress]\label{l:piw:prog}
  If $\typ{\G}P$ and $\G$ is complete, then either $P\stra{}P'$ for some
  $P'$, or $P\equiv(\new\many{\ell})(\Pi_i \outC{\ell_i}v_i)$
  where the $\ell_i$s are pairwise
  distinct.
\end{lemma}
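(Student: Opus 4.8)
The plan is to prove Progress for \piw{} (Lemma~\ref{l:piw:prog}) by adapting the structure of the \CCSL{} proof (Lemma~\ref{l:CCSL:prog}) and its \piL{} analogue (Lemma~\ref{l:piL:progress}), now accounting for the wait construct and the usage annotations $rw$. First I would put $P$ into a normal form up to $\equiv$: write $P\equiv(\new\many\ell)P_0$ where $P_0$ is a top-level parallel composition of releases $\out{\ell_i}{v_i}$, acquire prefixes $Q_j=\inp{\lel_j}{x_j}.P_j$, and wait prefixes $W_m=\wait{\lel_m}{x_m}.P_m$ (match constructs having been resolved and \nil{} absorbed by $\equiv$). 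If there are no guarded subterms at all, then $P_0$ is a parallel composition of releases and we are in the second case; the pairwise-distinctness of the $\ell_i$s follows from linearity of releases enforced by the composition of typing environments (at most one $1w$ usage per name). The goal is therefore to show that if at least one acquire or wait prefix occurs at top level, then $P_0$ can reduce.

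\medskip

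The core argument splits on whether a wait prefix is present. Suppose some wait $W_m=\wait{\lel_m}{x_m}.P_m$ occurs at top level. Because \G{} is complete, every free lock name of $P$ has a release available and carries no obligation and no available wait; since $W_m$ provides an available wait on $\lel_m$, completeness forces $\lel_m$ to be restricted, i.e.\ $\lel_m\in\many\ell$. By typing (rule \trans{Wait-w} gives usage $01$ to $\lel_m$, and rule \trans{New-w} supplies the matching $1w$), a release of $\lel_m$ must be available somewhere in $P_0$, and crucially no acquire on $\lel_m$ can be pending — otherwise the wait-usage $01$ would have to compose with an acquire-usage under \connname{} in a way that keeps $\lel_m$ in a single component, but the wait itself cannot occur in parallel with an acquire on the same lock without violating the obligation bookkeeping. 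If the available release of $\lel_m$ is a top-level $\out{\lel_m}{v}$, then the reduction rule for wait applies directly (after the $\lel_m\notin\flocks{P_m}$ side-condition, which holds since $x_m$ is the only name $P_m$ may use for the stored value). If instead the release of $\lel_m$ is not at top level, it must be guarded under some other acquire prefix, and we reduce to the acquire case below for that prefix.

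\medskip

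If no wait prefix is present but some acquire $Q_j$ is, I would replay the graph argument of Lemma~\ref{l:CCSL:prog} verbatim at the level of subjects. If two top-level prefixes share the same subject $\lel_j$ as one of the available top-level releases, reduction fires immediately. Otherwise, for each acquire $Q_j$ with subject $\lel_j$, completeness guarantees a release of $\lel_j$ is available, and by linearity the unique such release sits inside exactly one other guarded subterm $Q_{j'}$ (or wait, but we assumed none); this defines a graph on the guarded subterms where each vertex has out-degree $\geq 1$, hence contains a cycle. I would then invoke Lemma~\ref{l:nocycle} (its \piw{} analogue, which the acyclicity built into \connname{} supports) to derive a contradiction with typability of $P_0$, showing this sub-case cannot occur, so reduction must have been available after all.

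\medskip

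The main obstacle I anticipate is the interaction between the wait construct and the completeness hypothesis: unlike in \CCSL/\piL{} where availability of a release is the only resource to track, here I must simultaneously account for \emph{three} top-level prefix shapes (acquire, wait, release) and for the fact that release and wait obligations can be carried inside higher-order locks via usages $\tlockw\ttyp{1w}$. The delicate step is arguing that a top-level wait on $\lel_m$ forces $\lel_m$ to be restricted \emph{and} that the matching release is reachable without a blocking acquire — this requires carefully reading off from completeness (clauses $(i)$–$(iii)$ on free names) that no wait is available on any free name, so any available wait is on a bound name whose $11$-obligation in rule \trans{New-w} guarantees both a release and the absence of a deadlocking acquire cycle. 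Getting the bookkeeping of usages through the \connname{} composition exactly right, so that the cycle-based contradiction still applies when waits and obligation-carrying releases are in play, is where the real care is needed.
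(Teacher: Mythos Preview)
Your case split on ``wait present vs.\ wait absent'' has a genuine gap in the wait case. You claim that if $\wait{\lel_m}{x_m}.P_m$ occurs at top level then ``no acquire on $\lel_m$ can be pending'', arguing that the wait-usage $01$ cannot compose with an acquire-usage. This is false: rule \trans{Acq-w} gives $\lel_m$ usage $0w$, and when $w=0$ this composes with the wait's $01$ to give $01$, which is perfectly legal. The paper even exhibits the typable process $\inp\lel x.P\,|\,\wait\lel y.Q$ explicitly. Consequently, even when both a release $\out{\lel_m}{v}$ and the wait $\wait{\lel_m}{x_m}.P_m$ are at top level, the wait reduction need \emph{not} be enabled: an acquire on $\lel_m$ (at top level or guarded) may sit in another parallel component, so $\lel_m$ is still free in the rest of the process and the side condition of the wait rule fails. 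Your fallback ``reduce to the acquire case below'' does not apply either, because that case is predicated on no wait being present at top level.

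The paper's proof avoids this by \emph{not} splitting on the presence of waits. Instead it builds a single dependency graph whose vertices are \emph{all three} kinds of prime processes (releases, acquires, waits) with subjects among the restricted names. For each vertex it finds another prime process, with a \emph{different} subject, that mentions the first one's subject: for an acquire on $\lel_j$ one chases the available release of $\lel_j$; for a wait on $\lel_k$ one chases the release of $\lel_k$, or, if that release is at top level, an acquire on $\lel_k$ that must then be guarded; for a top-level release $\out{\lel_i}{v_i}$ one chases the wait on $\lel_i$, or an acquire on $\lel_i$ if the wait is already at top level. In each of these chases the paper also handles the possibility that the obligation is carried by a higher-order release $\out{\lel}{\lel_j}$; you flag this as an obstacle but never actually discharge it. Once every vertex has an outgoing edge to a vertex with distinct subject, a cycle exists and Lemma~\ref{l:nocycle} gives the contradiction. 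Your argument would need to be reorganised along these lines---in particular, releases on restricted names must be vertices too, and the interleaving of waits with acquires on the same lock must be tracked through the graph rather than ruled out a priori.
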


For lack of space, the proof is presented in Appendix~\ref{a:piw}.
Again, it follows the lines of the proof of
Lemma~\ref{l:CCSL:prog}. To construct a graph containing necessarily a
cycle, we associate to every acquire of the form $\lel(x).Q$ an
available release of \lel, which might occur in a release of the form
\out{\lel'}\lel, if $\lel'$ carries the release obligation. Similarly,
to every wait $\wait\lel x.Q$, we associate an available release, or,
if  a release
\out\lel v occurs at top-level, an acquire on 
\lel, that necessarily exists otherwise a reduction could be
fired. Finally, using a similar reasoning, to every release of \lel{}
at top-level, we associate a wait on \lel, or an acquire on \lel.

A consequence of Lemma~\ref{l:piw:prog} is that
$P \wred \nil$ when $\typ{\emptyset}P$.
\begin{proposition}[Deadlock- and Leak-freedom]\label{p:leak}
  \typ\G P and $P\wtra{}P'$, then $P'$ neither is deadlocked, nor has a
  leak.
\end{proposition}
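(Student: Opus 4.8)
The plan is to derive Proposition~\ref{p:leak} from the progress result (Lemma~\ref{l:piw:prog}) in the same way that Proposition~\ref{p:lock} follows from Lemma~\ref{l:piL:progress} in the \piL{} case, with the extra work coming entirely from the leak-freedom clause. First I would establish subject reduction for \piw: if \typ\G P and $P\stra{}P'$ then \typ\G{P'}, and moreover that completeness of \G{} is preserved along reductions (the free lock names and their usages should be invariant, as in Lemma~\ref{l:CCS:typ:struct}(ii)). This is the analogue of what we already have for \piL{}, extended to handle the wait reduction rule; the delicate point is the wait reduction, which removes a restriction $(\new\lel)$ and substitutes $P\sub v{\lel'}$, so I must check that the typing hypothesis for \lel{} (of the form $\tlockw\ttyp{11}$ introduced by \trans{New-w}) is correctly discharged by the matching release and wait, and that the resulting environment is still complete. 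Since $\wtra{}$ is the reflexive-transitive closure of $\stra{}$, iterating subject reduction gives \typ\G{P'} whenever $P\wtra{}P'$, so it suffices to prove the statement for a single typable process $P'$ with \typ\G{P'}.

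Next I would prove the deadlock-freedom half. Completeness of \G{} lets me invoke Lemma~\ref{l:piw:prog}, so either $P'\stra{}P''$ (and hence $P'$ is not stuck, so not deadlocked), or $P'\equiv(\new\many\ell)(\Pi_i\outC{\ell_i}v_i)$ with the $\ell_i$ pairwise distinct. In the latter case $P'$ is a terminated process: it contains no acquire, hence is not stuck in the sense that matters, and therefore is not deadlocked. This covers the deadlock-freedom clause exactly as in the \piL{} argument.

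For leak-freedom, the plan is to argue directly on the normal form. Suppose toward a contradiction that $P'$ has a leak, i.e.\ $P'\equiv(\new\lel)(P''\,|\,\out\lel v)$ with $\lel\notin\flocks{P''}$. By subject reduction \typ\G{P'} holds and \G{} is complete. The restriction $(\new\lel)$ is typed by rule \trans{New-w}, so inside it \lel{} carries usage $\tlockw\ttyp{11}$, meaning both a release \emph{and} a wait obligation on \lel{} must be discharged. The release is provided by \out\lel v, but the wait obligation of $\tlockw\ttyp{11}$ requires a wait on \lel{} to be available in the body; since $\lel\notin\flocks{P''}$, there is no such wait available in $P''$, so the usage $\tlockw\ttyp{11}$ cannot be matched and the process cannot be typed under \trans{New-w}. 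This contradicts \typ\G{P'}, so $P'$ is leak-free. Equivalently, and perhaps more cleanly, I would observe that a process leaking \lel{} is in fact stuck-and-complete with respect to the restricted \lel: the release \out\lel v has no wait or acquire to interact with under $(\new\lel)$, which contradicts progress for the complete sub-environment, so Lemma~\ref{l:piw:prog} already rules it out once subject reduction is in place.

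The main obstacle I expect is the subject-reduction step for the wait reduction, and in particular making precise that the wait obligation encoded in the usage $\tlockw\ttyp{11}$ cannot be satisfied by a leaked release alone. Getting this right requires carefully tracking how usages are composed by the \connname{} operator and the rule for combining typing hypotheses ($r_1+r_2\le1$, $w_1+w_2\le1$), and confirming that the only way to discharge the wait bit of a restricted lock is through an available wait on that lock. Once this bookkeeping is settled, both clauses of the proposition follow from Lemma~\ref{l:piw:prog} and invariance of typing under $\wtra{}$, mirroring the \CCSL{} and \piL{} developments.
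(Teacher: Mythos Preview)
Your plan matches the paper's implicit argument: the paper does not spell out a proof of Proposition~\ref{p:leak}, but its placement immediately after Lemma~\ref{l:piw:prog}, together with the justification of Corollary~\ref{c:wait:bool} as ``immediate by Lemma~\ref{l:piw:prog} and subject reduction'', makes clear that subject reduction plus progress is the intended route. Your first leak-freedom argument is the correct one: \trans{New-w} forces usage $11$ on $\lel$ in the body, \trans{Rel-w} contributes only $10$ from $\out\lel v$, so $P''$ must carry $\lel$ at usage $01$; the inversion lemma you identify in your last paragraph (a name with non-$00$ usage must occur free, either through an explicit wait or by being stored in some $\out{\lel'}\lel$) then contradicts $\lel\notin\flocks{P''}$.

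Your alternative ``cleaner'' argument, however, does not work as stated. A process that leaks $\lel$, namely $(\new\lel)(P''\mid\out\lel v)$ with $\lel\notin\flocks{P''}$, is not in general stuck: $P''$ may perfectly well reduce, so you cannot contradict progress directly. Nor is there a well-defined ``complete sub-environment'' to which Lemma~\ref{l:piw:prog} applies to the $(\new\lel)(\ldots)$ piece in isolation. Stick with the typing-inversion argument; it also has the virtue of not requiring $\G$ to be complete, which matches the hypothesis of the proposition (only $\typ\G P$ is assumed, and deadlock-freedom is then vacuous when $\G$ is not complete since a deadlocked process is by definition complete).
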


\begin{corollary}\label{c:wait:bool}
Suppose \typ{\G,\g,\lel:\tlockw\bbool{10}}P, and suppose that the
usage of all names in $S = \dom{\G,\g}$ is $11$. Then
$(\new S)P\wbarb{\out\lel\bval}$ for some \bval.
\end{corollary}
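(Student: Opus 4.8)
The plan is to deallocate all of $S$, observe that the resulting process is typed in a complete environment, and then combine progress with leak-freedom to force the residual to be a single top-level release of $\lel$ carrying a boolean.

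First I would type-check $(\new S)P$. Starting from $\typ{\G,\g,\lel:\tlockw\bbool{10}}P$, each name of $S=\dom{\G,\g}$ has usage $11$ by assumption, so rule \trans{New-w} is applicable to it; restricting all of $S$ one name at a time removes exactly the typing hypotheses on names of $S$ and yields $\typ{\G_0}{(\new S)P}$ with $\dom{\G_0}=\set\lel$ and $\G_0(\lel)=\tlockw\bbool{10}$ (the components inherited from $\G$ become empty and do not contribute to the domain). By the definition of completeness preceding Lemma~\ref{l:piw:prog}, $\G_0$ is complete, so $(\new S)P$ is a complete process.

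Next I would iterate progress. Since \piw{} is finite, repeatedly applying Lemma~\ref{l:piw:prog} to $(\new S)P$ terminates: each non-terminated complete process reduces, and by subject reduction (which holds in \piw{} as in \piL) the reduct is again typed in the same environment $\G_0$ --- note that $\lel$ is never deallocated, since its usage $10$ carries no wait obligation, so no wait on $\lel$ is ever available and $\lel$ stays free --- which is again complete, so we may reapply the lemma. We thus obtain $(\new S)P\wtra{}T$ with $T\equiv(\new\many\ell)(\Pi_i\out{\ell_i}{v_i})$, the $\ell_i$ pairwise distinct, and $\typ{\G_0}T$; in particular $\lel$ is free in $T$ and carries a release obligation.

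It then remains to show $T\equiv\out\lel\bval$ for some boolean \bval. I would first argue $\many\ell=\emptyset$. Every bound name $\ell_k\in\many\ell$ is introduced by \trans{New-w}, hence has usage $11$ and in particular a release obligation, so a release of $\ell_k$ is available in the body of $T$. Leak-freedom (Proposition~\ref{p:leak}) then forces $\ell_k$ to occur also as the object of some release $\out{\ell_0}{\ell_k}$, since otherwise $T$ would exhibit the leak $\res{\ell_k}(T''|\out{\ell_k}{v_k})$ with $\ell_k\notin\flocks{T''}$. The carrier $\ell_0$ stores a value of $\ell_k$'s type, hence has strictly greater type-depth; since types are finite, this ``stored-in'' relation is acyclic on the finite set of bound names, so following it reaches a bound name that is directly released but occurs as object nowhere --- a genuine leak, again contradicting Proposition~\ref{p:leak}. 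Hence $\many\ell=\emptyset$, every release of $T$ is on a free lock, the only free lock is $\lel$, and the fact that $\lel$ must be released together with distinctness of subjects leaves exactly one release, namely $\out\lel\bval$, with $\bval$ a boolean because $\lel:\tlockw\bbool{10}$. Therefore $(\new S)P\wtra{}\out\lel\bval$, and since $\out\lel\bval\barb{\out\lel\bval}$ we conclude $(\new S)P\wbarb{\out\lel\bval}$.

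The step I expect to be the main obstacle is the emptiness of $\many\ell$: the delicate point is to rule out that the release obligation on a bound lock is discharged by storing that lock inside another bound lock instead of by a genuine top-level release. This is exactly where leak-freedom, the finiteness of \piw, and the finiteness of types (which makes the storage relation acyclic) must be combined to collapse the terminated form to a single observable release.
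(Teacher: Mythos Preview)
Your proof is correct and follows the paper's one-line argument (``Immediate by Lemma~\ref{l:piw:prog} and subject reduction''), supplying the detail the paper omits, namely that the terminated form collapses to a single release of $\lel$. One small streamlining: your type-depth argument already forces $\many\ell=\emptyset$ once you note that each bound name's \emph{wait} obligation (usage $11$) can only be discharged by being stored as the object of another release---there are no wait prefixes in $T$---so the detour through Proposition~\ref{p:leak} is unnecessary.
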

\begin{proof}
  Immediate by Lemma~\ref{l:piw:prog} and subject reduction.
        \end{proof}
  This property is used to define barbed equivalence below.
It does not hold for higher-order locks: simply discarding $x$, the
  lock stored in \lel, might break typability if \lel{} carries
  an obligation.   
  \subsection{Typed Behavioural Equivalence in \piw}\label{s:piw:equiv}

\subsubsection{Barbed Equivalence}

In barbed equivalence in \piL{} (Definition~\ref{d:wbe}), we compare
complete \piL{} processes, intuitively to prevent blocked acquire operations
from making certain observations impossible.
Similarly, in \piw, we must also make sure that all wait operations in
the processes being observed will eventually be fired. For this, we
need to make the process complete (in the sense of
Lemma~\ref{l:piw:prog}), and to 
add restrictions so that wait transitions are fireable.

However, in order to be able to observe some barbs and discriminate
processes, we rely on Corollary~\ref{c:wait:bool}, and allow names to
be unrestricted as long as their type is of the form
\tlockw\bbool{10}. This type means that the lock is first order, and
that the context has the wait obligation. In such a situation,
interactions at \lel{} will never be blocked, the whole process is
deadlock-free, and eventually reduces to a parallel composition of
releases typed with \tlockw\bbool{10}.
Accordingly, we say that a \piw{} process $P$ is \emph{wait-closed} if \typ\G P and for any
$\lel\in\dom\G$, $\G(\lel)=\tlockw\bbool{10}$.

A typed relation in \piw{} is a set of triples $(\G,P,Q)$
such that \typ\G P and \typ\G Q, and we write \typ\G{P\R Q} for
$(\G,P,Q)\in\R$.
Barbed equivalence in \piw{} is defined like \wbe{}
(Definition~\ref{d:wbe}), restricting observations to wait-closed
processes. 
\begin{definition}[Barbed equivalence in \piw, \wbew]\label{d:webw}
  A symmetric typed relation \R{} is a \emph{typed barbed bisimulation} if
  \typ{\G}{P\R Q} implies the three following properties:
  \begin{enumerate}
  \item whenever
    $P, Q$ are wait-closed and
    $P\stra{}P'$, there is $Q'$ s.t.\
    $Q\wtra{}Q'$ and \typ{\G}{P'\R Q'};
  \item if $P, Q$ are wait-closed and
    $P\barb\eta$ then 
   $Q\wbarb\eta$;
 \item for any  $E,\G'$ s.t. \typ{\G'}{E[P]} and
   \typ{\G'}{E[Q]},
 and $E[P], E[Q]$ are wait-closed,
 we have \typ{\G'}{E[P]\,\R\, E[Q]}.
  \end{enumerate}
  \emph{Typed barbed equivalence} in \piw, written \wbew{}, is the greatest typed
  barbed bisimulation.
\end{definition}

In the second clause above, $\eta$ can only be of the form
\out\lel\bval, for some boolean value \bval. Lemma~\ref{l:obs:bool}
tells us that we could proceed in the same way when defining \wbe.

\subsubsection{Typed Transitions for \piw, and Bisimilarity}

We now define a LTS for \piw.
Transitions for name deallocation are not standard in the
$\pi$-calculus. To understand how we deal with these, consider 
$\wait\lel{\lel'}.P|Q$: this process can do
\stra{\wait\lel{v}} only if $Q$ does not use \lel.
Similarly, in  $\wait\lel{\lel'}.P|\inp\lel{x}.Q| \out\lel v$, the
acquire can be fired, and the wait cannot.

Instead of selecting type-allowed transitions among the untyped
transitions like in Section~\ref{s:piL:behav}, we give an inductive definition of typed
transitions, written \tstra\G P\mu{\G'}{P'}. This allows us to use the
rules for parallel composition in order to control the absence of a
lock, when a lock deallocation is involved.
Technically, this is done by refining the definition of the operator
to compose typing
contexts.

Actions of the LTS are defined as follows:
$\mu ~::=~ \inp\lel v \OR \out\lel v \OR \bout\lel{\lel'}\OR\tau\OR
\lel((v))\OR  \wtau\lel
$.
Name \lel{} plays a particular role in transitions along
wait actions \wait\lel v{} and \emph{wait synchronisations} \wtau\lel:
since \lel{} is deallocated, we must make sure that it is not used
elsewhere in the process.
We define \Gopl{\G_1}{\mu}{\G_2} as being equal to
\Gop{\G_1}{\G_2}, with the additional constraint that 
$\lel\notin\dom{\G_1}\cup\dom{\G_2}$ when $\mu=\wait\lel v$ or
$\mu=\wtau\lel$, otherwise
\Gopl{\G_1}{\mu}{\G_2} 
is not defined.
The rules defining the LTS are given on Figure~\ref{f:LTS:piw}.
We define $\flocks{\bout\lel{\lel'}}=\flocks{\wtau\lel}=\set\lel$, and
$\flocks{\inp\lel{v}} = \flocks{\wait\lel{v}} = \flocks{\out\lel{v}} =
\set{\lel,v}$ (with the convention that $\set{\lel,v}=\set\lel$ if $v$
is a boolean value).

We  comment on the transition rules.
Rules \trans{TR}, \trans{TA} and \trans{TW} express the meaning of
usages (respectively, $01$, $0w$ and $10$).
In rule \trans{TT}, \lel{} is deallocated, and the restriction on
\lel{} is removed.
In rules \trans{TPT}, \trans{TPTB} we rely on operation \Gopl{\G_1}\mu{\G_2} to make sure that \lel{} does not
appear in both parallel components of the continuation process, and
similarly for \trans{TPP} in the case where $\mu$ involves
deallocation of \lel.

Typability is preserved by typed transitions:   if \typ\G P{} and
\tstra{\G}P\mu{\G'}{P'}, then \typ{\G'}{P'}.

\begin{figure}[t]
  \centering
\begin{mathpar}
\inferrule[TR]{ }{\tstra{\set{\lel:\tlockw\ttyp{01},v:\ttyp}}{\out\lel{v}}
{\out\lel v}\emps\nil }

\and

\inferrule[TA]{ }{\tstra{\set{\g,\lel:\tlockw\ttyp{0w}}}{\inp\lel{\lel'}.P}
  {\inp\lel v} {\set{\g\sub
      v{\lel'},\lel:\tlockw\ttyp{1w} }} {P\sub v{\lel'}}  }

\and

\inferrule[TW]{ }{
  \tstra{\set{\g,\lel:\tlockw\ttyp{10}}}{\wait\lel{\lel'}.P}{\wait\lel
    v}
  {\set{\g\sub v{\lel'}}}{P\sub v{\lel'}}  }
\and
\inferrule[TN]{
  \tstra{\G,\g,\lel:\tlockw\ttyp{11}}P\mu{P'}{\G',\g',\lel:\tlockw\ttyp{11}} 
}{
  \tstra{\G,\g}{(\new\lel)P}\mu{\G',\g'}{(\new\lel)P'}
  }~\lel\notin\flocks\mu

  \and
  
  \inferrule[TO]{
    \tstra{\G,\g,\lel':\ttyp}
    P{\out\lel{\lel'}}{P'}{\G',\g,\lel':\ttyp'}
  }{
    \tstra{\G,\g}
    {(\new\lel')P}{\bout\lel{\lel'}}{P'}{\G',\g,\lel':\ttyp'}
  }
  
  \and

  \inferrule[TT]{
    \tstra{\G,\g,\lel:\tlockw\ttyp{11}}
    P{\tau/\lel}{\G',\g}{P'}
  }{
    \tstra{\G,\g}
    {(\new\lel)P}{\tau}{\G,\g}{P'}
  }

\inferrule[TPC]{ \tstra{\G_1}P{\inp\lel v}{\G'_1}{P'} \and
  \tstra{\G_2}Q{\out\lel v}{\G'_2}{Q'} }{
  \tstra{\Gop{\G_1}{\G_2}}{P|Q}\tau{\Gop{\G'_1}{\G'_2}} {P'|Q'}
}

\inferrule[TPB]{ \tstra{\G_1}P{\inp\lel{\lel'}}{\G'_1}{P'} \and
  \tstra{\G_2}Q{\bout\lel{\lel'}}{\G'_2}{Q'} }{
  \tstra{\Gop{\G_1}{\G_2}}{P|Q}\tau{\Gop{\G'_1}{\G'_2}}{(\new\lel')(P'|Q')}
}

\inferrule[TPP]{ \tstra{\G_1}P\mu{\G'_1}{P'} \and \typ{\G_2}Q }{
  \tstra{\Gop{\G_1}{\G_2}}{P|Q}\mu{\Gopl{\G'_1}\mu{\G_2}}{P'|Q}
}

\inferrule[TPT]{ \tstra{\G_1}P{\wait\lel v}{\G'_1}{P'} \and
  \tstra{\G_2}Q{\out\lel v}{\G'_2}{Q'} }{
  \tstra{\Gop{\G_1}{\G_2}}{P|Q}{\tau/\lel}{\Gopl{\G'_1}{\tau/\lel}{\G'_2}}{P'|Q'} 
}

\inferrule[TPTB]{ \tstra{\G_1}P{\wait\lel v}{\G'_1}{P'} \and
  \tstra{\G_2}Q{\bout\lel{\lel'}}{\G'_2}{Q'} }{
  \tstra{\Gop{\G_1}{\G_2}}{P|Q}{\tau/\lel}{\Gopl{\G'_1}{\tau/\lel}{\G'_2}}{(\new\lel')(P'|Q')} 
}

\end{mathpar}
  \caption[LTS for \piw]{\piw, Typed LTS. We omit symmetric versions
    of rules involving parallel compositions}
  \label{f:LTS:piw}
\end{figure}

Bisimilarity in \piw{} takes into account the additional
transitions w.r.t.\ \piL,
and is sound for \wbew.
\begin{definition}[Typed Bisimilarity in \piw, \bisw]
  A typed relation \R{} is a  typed bisimulation if \typ{\G}{P\R Q}
  implies that whenever $\tstra{\G}P\mu{\G'}{P'}$, we have
  \begin{enumerate}
  \item either  $Q\wtra{\hat\mu}Q'$ and
    \typ{\G'}{P'\R Q'} for some $Q'$
  \item or $\mu$ is an acquire $\ell(v)$,
    $Q|\out\ell v\wtra{} Q'$ and
    \typ{\G'}{P'\R Q'} for some $Q'$,
  \item or $\mu$ is a wait \wait{\ell}v,
    $(\new\lel)(Q|\out\ell v)\wtra{}Q'$ and
    \typ{\G'}{P'\R Q'} for some $Q'$,
  \item or $\mu=\wtau\lel$,
    $(\new\lel)Q\wtra{}Q'$ and
    \typ{\G'}{P'\R Q'} for some $Q'$, 
  \end{enumerate}
  and symmetrically for the typed transitions of $Q$.
\emph{Typed bisimilarity in \piw}, written \bisw, is the largest typed
bisimulation.
  \end{definition}

  \begin{proposition}[Soundness]\label{p:piw:sound}
    For any $\G, P, Q$, if \typ\G{P\bisw Q}, then \typ\G{P\wbew Q}.
  \end{proposition}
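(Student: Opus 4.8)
The plan is to follow the standard route for establishing soundness of a labelled bisimilarity with respect to barbed equivalence, adapted to the typed, asynchronous setting of \piw. Since the third clause of Definition~\ref{d:webw} already demands closure under execution contexts, it suffices to prove two facts and then take $\R=\bisw$ itself as the candidate barbed bisimulation: first, that $\bisw$ is preserved by the execution contexts built from $\hole$ by the operators $\cdot\,|\,T$ and $(\new\lel)\cdot$; second, that, granted this closure, $\bisw$ meets the reduction- and barb-matching clauses of a typed barbed bisimulation on wait-closed configurations. Symmetry of $\bisw$ is immediate from its definition. Putting these together shows that $\bisw$ is a typed barbed bisimulation, and since $\wbew$ is the greatest such relation we obtain $\bisw\subseteq\wbew$.

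The heart of the argument is congruence of $\bisw$ for execution contexts, whose decisive case is parallel composition: I would show that if $\typ{\G_0}{P\bisw Q}$ then $\typ{\G}{P|T\bisw Q|T}$ whenever both sides are typable. Concretely, I would prove that the typed relation $\set{(\G,\,C[P],\,C[Q]) \,:\, \typ{\G_0}{P\bisw Q},\ C \text{ an execution context},\ \typ{\G}{C[P]},\ \typ{\G}{C[Q]}}$ is a typed bisimulation, by induction on the derivation of a typed transition $\tstra{\G}{C[P]}{\mu}{\G'}{R}$ and case analysis on the rule applied at the root. For a move localised to $T$ or to a restriction (rules \trans{TN}, \trans{TO}, \trans{TT}) one replays it on the other side using that typability is preserved by typed transitions; for a move localised to the hole one appeals directly to the bisimulation hypothesis relating $P$ and $Q$; and for a synchronisation between the hole and $T$ (rules \trans{TPC}, \trans{TPB}, \trans{TPT}, \trans{TPTB}) one combines a labelled move of $P$, matched by the appropriate clause of the bisimulation game, with the complementary move of $T$. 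Because execution contexts never place an input prefix in front of the hole, closure under substitution is not needed, which removes the usual difficulty with input-prefix congruence in the $\pi$-calculus.

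With congruence in hand, the two observation clauses of Definition~\ref{d:webw} follow from the harmony between the reduction semantics and the typed LTS. For the reduction clause, I would establish that for a wait-closed $P$ the reductions $P\red P'$ coincide with the internal typed transitions $\tstra{\G}{P}{\tau}{\G'}{P'}$, rule \trans{TT} accounting for the reductions that deallocate a restricted lock; clause~(1) of the bisimulation then yields $Q\wtra{}Q'$ with $\typ{\G'}{P'\bisw Q'}$. For the barb clause, by Lemma~\ref{l:obs:bool} and the observation following Definition~\ref{d:webw} it suffices to treat boolean barbs $\out\obs\bval$, which are reflected by a corresponding output transition and are matched weakly by the first branch of the bisimulation, so that $Q\wbarb{\out\obs\bval}$. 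Throughout, preservation of typability under typed transitions keeps every configuration well-typed, while Lemma~\ref{l:piw:prog} ensures that, on complete processes, the transitions witnessing these correspondences indeed exist.

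The main obstacle I anticipate is the parallel-composition case of congruence when lock deallocation is involved, namely rules \trans{TPT} and \trans{TPTB}, which produce the internal action $\wtau\lel$. There the splitting operator $\Gopl{\G'_1}{\wtau\lel}{\G'_2}$ is defined only if $\lel$ belongs to neither residual environment, so to simulate such a step I must propagate the absence of $\lel$ from $C[P]$ to $C[Q]$ and invoke the wait clauses~(3) and~(4) of the bisimulation at precisely the right point---in particular when it is $T$, rather than the hole, that provides the release consumed by a wait. Making the usages and release/wait obligations line up across the operators $\Gop{}{}$ and $\Gopl{}{}{}$, while preserving the acyclicity enforced by \connname, is the delicate bookkeeping; once this pattern is settled, the remaining synchronisation cases follow the same template.
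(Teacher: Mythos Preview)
Your proposal is correct and follows essentially the same approach the paper indicates: the paper does not spell out a proof of Proposition~\ref{p:piw:sound}, but (as for Proposition~\ref{p:piL:sound}) it identifies preservation of $\bisw$ under parallel composition as the key property, which is exactly the heart of your argument, and your anticipation of the $\wtau\lel$ cases as the delicate point is on target. One small remark: your appeal to Lemma~\ref{l:piw:prog} is unnecessary for soundness---progress plays no role in showing that $\bisw$ is a typed barbed bisimulation---so you can safely drop that sentence.
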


\begin{example}
  The law $\lel(x).\out\lel x = \nil$ holds in \piw, at type
  $\lel:\tlockw\ttyp{00}$, for any \ttyp.

  Suppose \typ\G
  {\inp\lel x.P|\wait\lel y.Q}. Then we can prove
$
    \typ\G{\inp\lel x.P|\wait\lel y.Q  ~\bisw~ \inp\lel x.(P~|~\wait\lel y.Q)}.
$

Using this equivalence and the law of asynchrony, we can deduce
$
\wait\lel x.P  ~\wbew~ \inp{\lel}x.(\out{\lel}x|\wait\lel x.P)
$.
\end{example}

An equivalence between \piL{} processes is also valid in \piw. To
state this property, given $P$ in \piL, we introduce \encw P, its
translation in \piw. The definition of \encw P is simple, as  we just
need to add wait constructs under restrictions for \encw P to be typable. 

\begin{lemma}\label{l:bisL:bisw}
  Suppose \typ{\G;\Rs}{P\bisL Q}. Then
      \typ{\G_w}{\encw P\bisw\encw Q} for some \piw{}
  typing environment $\G_w$.
\end{lemma}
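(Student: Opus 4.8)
The plan is to exhibit a concrete $\piw$ typing environment $\G_w$ and a typed bisimulation in $\piw$ containing $(\G_w,\encw P,\encw Q)$. First I would fix the translation: $\encw\cdot$ is a homomorphism on all operators of \piL{} except restriction, for which
\[
  \encw{(\new\lel)R}\ =\ (\new\lel)\bigl(\wait\lel x.\nil\ \mid\ \encw R\bigr),
\]
with $x$ fresh. The added wait discharges the wait obligation forced by rule \trans{New-w}, while the release obligation that $R$ already carries in its \piL{} typing discharges the release obligation; this is exactly why $\encw P$ is typable. I would read off $\G_w$ from $(\G,\Rs)$ by setting, for each $\lel\in\dom\G$, $\G_w(\lel)=\tlockw\ttyp{10}$ if $\lel\in\Rs$ and $\tlockw\ttyp{00}$ otherwise, where $\ttyp$ is the \piw{} type mirroring $\lel$'s sort under $\Sigma$, and then check by induction on the \piL{} typing derivation that $\typ{\G;\Rs}P$ implies $\typ{\G_w}{\encw P}$, the only new point being \trans{New-w}, handled by the extra wait.

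Second, I would set up an operational correspondence between the type-allowed transitions of $P$ (Definition~\ref{d:typ:trans:piL}) and the typed transitions of $\encw P$ (Figure~\ref{f:LTS:piw}). The key observation is that, since $\encw\cdot$ inserts waits only directly beneath their restriction, $\encw P$ has \emph{no free} \wait\lel v{} action; every wait fires by synchronising with a release underneath its own $(\new\lel)$, so rule \trans{TPT} produces a \wtau\lel{} that rule \trans{TT} immediately turns into an ordinary $\tau$ at the restriction. Thus deallocations manifest at top level as internal $\tau$-steps, each garbage-collecting an inert released lock $(\new\lel)\out\lel v$ that the underlying \piL{} process no longer uses. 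For every non-wait action $\mu$ I would show $[\G;\Rs;P]\stra\mu[\G';\Rs';P']$ iff $\tstra{\G_w}{\encw P}\mu{\G'_w}{\encw{P'}}$, so the two transition systems agree away from deallocation, and weak \piL{} transitions $Q\wtra{}$ lift to weak \piw{} transitions $\encw Q\wtra{}$ (the extra administrative $\tau$'s are absorbed by $\wtra{}$).

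Third, I would take as candidate
\[
  \mathcal R_w\ =\ \{(\G_w,\encw P,\encw Q)\ :\ \typ{\G;\Rs}{P\bisL Q}\}
\]
and verify that it is a \piw{} bisimulation by case analysis on the action of $\encw P$. An acquire, release or bound-output transition is pulled back through the operational correspondence to a \piL{} transition of $P$, matched using $P\bisL Q$ (Definition~\ref{d:bisL}), and reflected forward to a matching transition of $\encw Q$; the wait clause never applies, by the observation above; a deallocation appears as a $\tau$ and is matched under the reduction clause. Soundness of \bisw{} (Proposition~\ref{p:piw:sound}) then yields $\typ{\G_w}{\encw P\bisw\encw Q}$.

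The main obstacle is keeping residuals inside $\mathcal R_w$ across deallocation $\tau$-steps, since \bisL{} has no clause for deallocation. Concretely, after $\encw P$ garbage-collects $\lel$ I must identify the residual with $\encw{P'}$ for a \piL{} process $P'$ that is still \bisL-related to a corresponding $Q'$. This reduces to an auxiliary fact about \piL{} that I would prove directly from Definition~\ref{d:bisL}: cancelling an inert restricted release, i.e. passing from $(\new\lel)(\out\lel v\mid R)$ to $R$ when $\lel\notin\flocks R$, preserves \bisL{}, because once $\lel$ is restricted and otherwise unused the inert release contributes no type-allowed transition. Using the \piL{} termination property (Lemma~\ref{l:piL:progress} and the subsequent remark) to guarantee that restricted locks indeed reach such inert form, and the \piw{} progress result (Lemma~\ref{l:piw:prog}) to guarantee the waits eventually fire, this cancellation fact lets me close the residuals after every deallocation step, completing the verification.
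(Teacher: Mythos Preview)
Your proposal is correct and follows the paper's line closely: you build $\G_w$ from $(\G,\Rs)$ by assigning usage $10$ to names in $\Rs$ and $00$ otherwise, set up an operational correspondence between type-allowed \piL{} transitions and typed \piw{} transitions of encoded processes, and take the $\encw\cdot$-image of $\bisL$ as the candidate relation---this is exactly what the paper sketches. The one place you diverge is in absorbing the administrative deallocation $\tau$-steps: the paper appeals to the \piw{} law $\typ{\set{\set{v}}}{(\new\lel)(\out\lel v\mid\wait\lel x.\nil)\bisw\nil}$, effectively working up to $\bisw$, whereas you use the \piL{} law $(\new\lel)(\out\lel v\mid R)\bisL R$ for $\lel\notin\flocks R$ so that residuals stay inside $\mathcal R_w$ directly. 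Both routes work and are essentially dual; yours avoids an explicit up-to argument at the price of needing congruence of $\bisL$ for execution contexts, which the paper already establishes (for parallel composition) when proving Proposition~\ref{p:piL:sound}.

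One correction: the appeal in your last paragraph to progress and termination (Lemmas~\ref{l:piL:progress} and~\ref{l:piw:prog}) is unnecessary and slightly misdirected. The bisimulation game is local: a deallocation $\tau$ of $\encw P$ only arises when the restricted lock is \emph{already} inert, and at that point your inert-cancellation law together with the zero-step match $\encw Q\wtra{}\encw Q$ already closes the diagram, exactly as you describe just before. You never need to argue that inert states are eventually reached or that waits eventually fire.
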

This result shows that the addition of wait does not increase the
discriminating power of contexts.
We refer to Appendix~\ref{a:piw} for the definition of \encw P and a
discussion of the proof of Lemma~\ref{l:bisL:bisw}.

\section{Related and Future Work}\label{s:ccl}

The basic type discipline for lock names that imposes a safe usage of
locks by always
releasing a lock after acquiring it is discussed
in~\cite{DBLP:conf/unu/Kobayashi02}. This is specified using \emph{channel
  usages} (not to be confused with the usages of
Section~\ref{s:def:piw}). Channel usages  in~\cite{DBLP:conf/unu/Kobayashi02} are processes in a subset of CCS, and can
be defined in sophisticated ways to control the behaviour of
$\pi$-calculus processes.
The encoding of references in the asynchronous $\pi$-calculus studied
in~\cite{DBLP:conf/concur/HirschkoffPS20} is also close to how locks are
used in \piw. A reference is indeed a lock that must be released
\emph{immediately} after the acquire. The typed equivalence to reason about
reference names in~\cite{DBLP:conf/concur/HirschkoffPS20} has
important 
differences w.r.t.\  \wbew, notably because the deadlock- and
leak-freedom properties are not taken into consideration in that work.

\smallskip

The type system for \piw{} has several ideas in common with
\cite{DBLP:journals/pacmpl/JacobsB23}. 
That paper studies
\lamlo, a functional
language with higher-order locks and thread spawning.
The type system for \lamlo{} guarantees leak- and
deadlock-freedom by relying on 
duality and linearity properties, which 
entail the absence of cycles.
In turn, this approach originates in
work on binary session types,
 and in particular on concurrent versions of the
Curry-Howard
correspondence~\cite{DBLP:conf/esop/HondaVK98,DBLP:journals/jfp/GayV10,DBLP:journals/jfp/Wadler14,DBLP:conf/concur/CairesP10,DBLP:conf/esop/ToninhoCP13,DBLP:conf/esop/RochaC23}.

\piw{} allows a less controlled form of interaction than functional
languages or binary sessions. Important differences are: names do
not have to be used linearly; there is
no explicit notion of thread, neither a fork instruction, in \piw; reduction is not deterministic.
The type system for \piw{} controls parallel composition to rule out
cyclic structures among interacting processes.

The simplicity of the
typing rules, and of the proofs of deadlock- and leak-freedom, can be leveraged
to develop 
a theory of typed
behavioural equivalence for \piL{} and \piw. Soundness of bisimilarity
provides a useful tool to establish
equivalence results.  Proving completeness is not obvious, intuitively
because the constraints imposed by typing prevent us from adapting
standard approaches.
The way \bisw{} is defined should allow us to combine locks
with other programming constructs in order to reason about
programs featuring locks and, e.g., functions,
continuations, and references. Work in this direction will build 
   on~\cite{DBLP:journals/mscs/Milner92,DBLP:journals/iandc/Sangiorgi94,DBLP:conf/lics/DurierHS18,
  DBLP:conf/lics/HirschkoffPS21, DBLP:conf/icalp/Prebet22}.

\smallskip

 Our proofs of deadlock- and leak-freedom suggest that there is room
 for a finer analysis of how lock names are used. It is natural to try and extend our type system in order to accept more processes,
 while keeping the induced  behavioural equivalence tractable.
A possibility for this is to add \emph{lock
   groups}~\cite{DBLP:journals/pacmpl/JacobsB23}, with the aim of
 reaching an expressiveness comparable to the system
 in~\cite{DBLP:journals/pacmpl/JacobsB23}.
In a given lock group, locks are ordered, which makes it possible to
analyse systems having a cyclic topology.

Relying on orders to program with locks is a natural approach, that
has been used to define expressive type systems for lock freedom in
the
$\pi$-calculus~\cite{DBLP:conf/popl/IgarashiK01,DBLP:conf/unu/Kobayashi02,DBLP:conf/csl/Padovani14}.
In these works, some labelling is associated to channels or to actions
on channels, and the
typing rules
guarantee that it is always possible to define an order, yielding
lock-freedom.
We plan to study how our type system can be extended with lock groups
or ideas from type systems based on orders.

Rule~\rref{eq:wait} from Section~\ref{s:intro} explains in a concise
way how the wait operation behaves. Part of the difficulty in
Section~\ref{s:piw} is in defining a labelled semantics that is
compatible with the `magic' of executing a wait on \lel{} only when the
restriction can be put on top of the final release of \lel.
We plan to provide a more operational description of
deallocation, using, e.g., reference counting as
in~\cite{DBLP:journals/pacmpl/JacobsB23}. \piw{} could then be seen as
a language to describe at high-level what happens at a lower level when using and
deallocating locks.

\paragraph*{Acknowledgement.}
We are grateful to Jules Jacobs for an interesting discussion about
this work, and for suggesting the reduction rule~\rref{eq:wait} from
Section~\ref{s:intro}.
We also thank the anonymous referees for their helpful remarks and
advices.

\bibliography{refs-pilocks.bib}

\appendix
\section{Additional Material for Section~\ref{s:piL}}

\subsection{\CCSL, Operational Semantics}\label{a:CCSL}

Structural congruence is the least congruence satisfying the following
axioms:
\begin{mathpar}
  \inferrule{ }{P|Q\equiv Q|P}

  \inferrule{ }{P|(Q|R)\equiv (P|Q)|R}
  \\
  \inferrule{ }{P|(\new\lel) Q\equiv (\new\lel)(P|Q)}\text{ if
  }\lel\notin\flocks P

  \inferrule{ }{(\new\lel)(\new\lel')P\equiv(\new\lel')(\new\lel)P}
\end{mathpar}

To define reduction, we introduce \emph{execution contexts}, $E$,
given by $E~::=~\hole\OR E|P\OR (\new\lel)E$, where \hole{} is the
hole. $E[P]$ is the process obtained by replacing the hole in $E$ with
$P$.

Reduction is defined by the following rules:
\begin{mathpar}
  \inferrule{ }{\outC\lel|\lel.P \red P}

  \and
  
  \inferrule{ P\red P' }{ E[P]\red E[P'] }

  \and
  
  \inferrule{ Q\equiv P\and P\red P'\and P'\equiv Q' }{ Q\red Q' }
  \end{mathpar}

  \subsection{\CCSL, Properties of the Type System}

\begin{proof}[of Lemma~\ref{l:nocycle}]
  We show by induction on $k$ that $\ell_1.P_1|\dots|\ell_{k-1}.P_{k-1}$ is
  lock-connected: this holds because for every $i$, $\ell_i.P_i$ is
  lock-connected, and because 
  $\ell_i.P_i\sharel{\ell_i}\ell_{i+1}.P_{i+1}$ for all $i<k$.

  Moreover, we know $\ell_{k-1}.P_{k-1}\sharel{\ell_{k-1}}\ell_k.P_k$
  and  $\ell_{k}.P_{k}\sharel{\ell_k}\ell_1.P_1$. So names
  $\ell_{k-1}$ and $\ell_k$ belong to the free names both of
 $\ell_1.P_1|\dots|\ell_{k-1}.P_{k-1}$ and of $\ell_k.P_k$. By
 Lemma~\ref{l:lock:connect}, this prevents
 $\ell_1.P_1|\dots|\ell_{k}.P_{k}$
 from being typable.
\end{proof}

  \subsection{\piL, Operational Semantics}\label{a:piL}

  \newcommand{\eqr}{\ensuremath{\equiv_\mathrm{r}}}
  
Structural congruence in \piL, written $\equiv$, is standard, except
for the treatment of mismatch. Indeed, the corresponding axiom
cannot be used under an acquire prefix.

To handle this, we introduce an auxiliary structural congruence
relation, written \eqr.
Relation $\equiv$ is the smallest equivalence relation that satisfies
the axioms for $\equiv$ in \CCSL, plus the following ones
\begin{mathpar}
    \inferrule[PNil]{ }{P|\nil\equiv P}

  \inferrule[RNil]{ }{(\new\lel)\nil\equiv\nil}

  \inferrule[Mat]{ }{[v=v]P_1,P_2\equiv P_1}

    \inferrule[Mis]{ }{[v=v']P_1,P_2\,\equiv\, P_2}\text{ if } v\neq v'
\end{mathpar}
and also the contextual axioms 
\begin{mathpar}
  \inferrule[CPar]{P\equiv Q}{P|T\equiv Q|T}

  \inferrule[CRes]{P\equiv Q}{(\new\lel)P\equiv(\new\lel)Q}

  \inferrule[CAcq]{P\eqr
    Q}{\inp\lel{\lel'}.P\equiv\inp\lel{\lel'}.Q}

\end{mathpar}
The last axiom refers to \eqr, which is defined like $\equiv$, except that
\trans{Mis} is omitted and \trans{Cacq} is replaced by
\begin{mathpar}
  \inferrule[CAcq$_\mathrm{r}$]{P\eqr
    Q}{\inp\lel{\lel'}.P\eqr\inp\lel{\lel'}.Q}
\end{mathpar}

\paragraph*{Labelled Semantics for \piL.}

Actions of the LTS are defined by $\mu\quad::=\quad \inp\lel v\OR \out\lel
v\OR \bout\lel{\lel'}\OR \tau$.

The set of free names of $\mu$ is defined by $\flocks{\out\lel
  v}=\flocks{\inp\lel v}=\set{\lel,v}$, $\flocks\tau=\emps$ and
$\flocks{\bout\lel{\lel'}}=\set\lel$.

The set of bound names of $\mu$ is defined by $\blocks\mu = \emps$, except for
$\blocks{\bout\lel{\lel'}}=\set{\lel'}$.

The transition rules are the following:
\begin{mathpar}
  \inferrule{ }{\out\lel v\stra{\out\lel v}\nil}

  \inferrule{ }{\lel(\lel').P\stra{\inp\lel v}P\sub v{\lel'}}

  \inferrule{ P\stra{\out\lel{\lel'}}P' }{
    (\new\lel')P\stra{\bout\lel{\lel'}}P' }

  \inferrule{ P\stra\mu P' }{(\new\lel)P\stra\mu(\new\lel)P' }
  ~\lel\notin\flocks\mu

  \inferrule{ P\stra\mu P' }{ P|Q\stra\mu P'|Q}~\flocks
  Q\cap\blocks\mu=\emps

  \inferrule{ P\stra{\out\lel v}P'
    \and Q\stra{\inp\lel v}Q' }{
    P|Q\stra\tau P'|Q' }

  \inferrule{ P\stra{\bout\lel{\lel'}}P'
    \and Q\stra{\inp\lel{\lel'}}Q' }{
    P|Q\stra\tau (\new\lel')(P'|Q') }

  \inferrule{ P_1\stra\mu P'_1 }{[v=v]P_1,P_2\stra\mu P'_1 }

  \inferrule{ P_2\stra\mu P'_2 }{
    [v=v']P_1,P_2\stra\mu P'_2 }~v\neq v'
\end{mathpar}

\section{Additional Material from Section~\ref{s:piw}}\label{a:piw}

\subsection{Leak-Freedom in \piw}

The proof of Lemma~\ref{l:piw:prog} follows the approach of the
proof of Lemma~\ref{l:CCSL:prog}. 
An additional difficulty with respect to the latter proof is that release and wait obligations on a given
lock need not be explicit in the process, in the sense that they can be
stored in another lock.
      \begin{proof}
  We first consider the situation where $\typ\G P_0$, \G{} is
  complete, and we can write
  \begin{mathpar}
    P_0~\equiv~ \res{S}(
    \prod_i\out{\lel_i}{v_i}~|~ \prod_j\inp{\lel_j}{x_j}.P_j~|~
    \prod_k\wait{\lel_k}{y_k}.{Q_k})
    .
  \end{mathpar}
{We let $P = \prod_i\out{\lel_i}{v_i}~|~ \prod_j\inp{\lel_j}{x_j}.P_j~|~
\prod_k\wait{\lel_k}{y_k}.{Q_k}$.}
  
 We introduce some terminology to reason about this decomposition.   A \emph{prime process} is a process of the form $\out\lel v$,
 $\inp\lel x.P'$ or $\wait\lel y.P'$. Here ``prime'' refers to the fact
 that such processes cannot be decomposed modulo $\equiv$.
   We call \emph{subject} of a prime process the name that
  occurs in subject position in the topmost prefix of that process:
  these are the $\lel_i$s, the $\lel_j$s and the $\lel_k$s in the
  decomposition above. 

  We make the two following observations.
      First, for any $\lel\in\flocks P$, {either $\ell\in\flocks{P_0}$, or}
a release of \lel{} and a wait on
    \lel{} must be available, by typing.
                                            Second, none of the $\lel_i$ is equal to one of the $\lel_j$, since
    otherwise $P_0$ could reduce.     Moreover, if some $\lel_i$ is equal to one of the $\lel_k$s, then $P$
    necessarily contains an acquire on $\lel_i$, since otherwise $P_0$
    could reduce by performing a wait transition. {In the following,
    we do not consider the prime processes whose subject is in $\G$.
    Recall that these processes are outputs $\out{\ell_i}{v_i}$ with $v_i$
    being either of type $\bbool$ or $\tlockw{\ttyp}{00}$.}

    \medskip
    
      To derive a contradiction,   we show that the subject of every prime process
  occurs free in another prime process having a different subject.
    We examine the three forms of prime processes.
      \begin{itemize}
  \item 
  Consider first   $\inp{\lel_j}{x_j}.P_j$. The available \emph{release of $\lel_j$} cannot occur
  at top-level, since otherwise $P$ could reduce. The release cannot be
  available under an acquire or wait prefix on $\lel_j$, by typing and
  by definition of being available.

  The release of $\lel_j$ may be available in one of the $P_{j'}$s, or in one
  of the $P_k$s occurring under a prefix at some lock name different
  from $\lel_j$. In both cases, $\lel_j$ occurs in another prime
  process having a different subject.

 If the release on $\lel_j$ is available neither in the $P_j$s nor in
 the $P_k$s, then there exists another release
 of the form
 \out{\lel}{\lel_j} for some \lel, that does not occur under an
 acquire on $\lel_j$.
 We remark that \lel's usage is of the form $1w$, and that
 $\lel\neq\lel_j$. 
   
 Thus, the release of \lel{} necessarily occurs in a prime
 process whose subject is different from $\lel_j$.

\item Consider now $\wait{\lel_k}{y_k}.{P_k}$. As above, we reason about
  the \emph{release of $\lel_k$}. The only difference is that the
  release of $\lel_k$ may occur at top-level. If this is the case, then
  there is necessarily an acquire on $\lel_k$, otherwise $P$ could
  reduce. This acquire cannot occur at top-level, since otherwise $P$
  could reduce, by performing a wait transition. Hence, there is a prime
  process whose subject is different from $\lel_k$ that contains an
  acquire on $\lel_k$.

\item Consider \out{\lel_i}{v_i}. We reason about the \emph{wait on
    $\lel_i$}.
      If the wait on $\lel_i$ occurs at top-level, then, as above, an acquire on
  $\lel_i$ must occur in $P$, since otherwise $P_0$ could reduce.
  That acquire on $\lel_i$ cannot occur at top-level, since otherwise
  $P$ could reduce. So in this case $\lel_i$ occurs in a prime process
  whose subject is different from $\lel_i$.

  If the wait on $\lel_i$ does not occur at top-level, then it
  can occur in a prime process whose subject is different
  from $\lel_i$: that process cannot start with an acquire on 
  $\lel_i$ since otherwise $P$ could reduce.

  The last possibility is that
     a subterm  of the form \out\lel{\lel_i} occurs in some
 other prime process, and \lel{} carries the wait obligation.
 Reasoning as above,
  the subject of the prime process cannot be
  $\lel_i$.
  \end{itemize}
    We have shown that every prime process in the decomposition above
  whose subject is \lel{} can be connected
    with a different prime process.
        Like in the proofs of deadlock-freedom, we obtain a cycle, which is
  impossible by (the
 counterpart of) Lemma~\ref{l:nocycle}.

 \smallskip
 
\end{proof}

\subsection{Translating a \piL{} Process in \piw}

If $P$ is a \piL{} process, \encw P is its
translation into \piw, defined as follows:
\begin{mathpar}
  \encw{(\new\lel)P} = (\new\lel)(\encw P\,|\,\wait\lel x.\nil)

  \encw{\lel(\lel').P} = \lel(\lel').\encw P

  \encw{\out\lel v} = \out\lel v

  \encw{P_1|P_2} = \encw{P_1}\,|\,\encw{P_2}

  \encw{[v=v']P_1,P_2} = [v=v']\encw{P_1},\encw{P_2}
\end{mathpar}

To prove Lemma~\ref{l:bisL:bisw}, we establish a correspondence
between typing in \piL{} and in \piw.
If \typ{\G;\Rs}P, the typing
environment to type $P$ seen as a \piw{} process is constructed by
making sorts explicit, and by assigning usage $10$ for name \lel{} if
$\lel\in\Rs$, and $00$ otherwise.
Conversely, if $P\in\piL$ can be typed as a \piw{} process with
\typ{\G_w}P, then we can suppose that $\G_w$ does not contain any
usage of the form $r1$. We recover a \piL{} typing for $P$ by
collecting all names having type usage $10$ in \Rs, and erasing type
information in the components of $\G_w$, yielding \G, so that \typ{\G;\Rs}P.

This correspondence is extended to a correspondence between
transitions, so that a bisimulation relation in \piL{} is also a
bisimulation in \piw, via the aforementioned translation.
To prove the latter property, we rely on the equivalence
  \typ{\set{\set{v}}}{(\new\lel)(\out\lel v|\wait\lel x.\nil)
    \bisw\nil} in \piw.

\end{document}